\newtheorem{theorem}{Theorem}[section]
\newtheorem{lemma}[theorem]{Lemma}
\newenvironment{customthm}[1]
  {\innercustomthm}
  {\endinnercustomthm}
\theoremstyle{definition}
\newtheorem{definition}{Definition}
\theoremstyle{remark}
\newcommand{\poly}{\text{poly}}
\newcommand{\test}{\textsc{test}}
\newcommand{\dist}{d}
\newcommand{\Oish}{\widetilde{O}}
\newcommand{\Erdos}{Erd{\" o}s}
\DeclareMathOperator*{\E}{\mathbb{E}}
\title{Optimal Vertex Fault-Tolerant Spanners in Polynomial Time}
\author{Greg Bodwin\thanks{Supported in part by NSF awards CCF-1717349,  DMS-183932 and CCF-1909756.}\\University of Michigan\\ \texttt{bodwin@umich.edu} \and Michael Dinitz\thanks{Supported in part by NSF award CCF-1909111.}\\Johns Hopkins University\\ \texttt{mdinitz@cs.jhu.edu} \and Caleb Robelle\\UMBC\\ \texttt{carobel1@umbc.edu}}
\date{}
\begin{document}

\begin{titlepage}

\maketitle

\begin{abstract}
    Recent work has pinned down the existentially optimal size bounds for vertex fault-tolerant spanners: for any positive integer $k$, every $n$-node graph has a $(2k-1)$-spanner on $O(f^{1-1/k} n^{1+1/k})$ edges resilient to $f$ vertex faults, and there are examples of input graphs on which this bound cannot be improved.
    However, these proofs work by analyzing the output spanner of a certain exponential-time greedy algorithm.
    In this work, we give the first algorithm that produces vertex fault tolerant spanners of optimal size and which runs in polynomial time.
    Specifically, we give a randomized algorithm which takes $\Oish\left( f^{1-1/k} n^{2+1/k}  + mf^2\right)$ time.  We also derandomize our algorithm to give a deterministic algorithm with similar bounds.  
    This reflects an exponential improvement in runtime over [Bodwin-Patel PODC '19], the only previously known algorithm for constructing optimal vertex fault-tolerant spanners.
\end{abstract}
\thispagestyle{empty}
\end{titlepage}

\section{Introduction}

Let $G = (V, E)$ be a graph, possibly with edge lengths $w : E \rightarrow \mathbb{R}_{\geq 0}$. A $t$-spanner of $G$, for $t \geq 1$, is a subgraph $H = (V, E')$ that preserves all pairwise distances within a factor of $t$, i.e.,
\begin{align}
d_{H}(u,v) \leq t \cdot d_G(u,v) \label{eq:nonfttest}
\end{align}
for all $u,v \in V$ (where $d_{X}$ denotes the shortest-path distance in a graph $X$).  Since $H$ is a subgraph of $G$ it is also true that $d_G(u,v) \leq d_{H}(u,v)$, and so distances in $H$ are the same as in $G$ up to a factor of $t$.  The distance preservation factor $t$ is called the \emph{stretch} of the spanner.  
Spanners were introduced by Peleg and Ullman~\cite{PelegU:89} and Peleg and Sch{\"{a}}ffer~\cite{PelegS:89}, and have a wide range of applications in routing \cite{PelegU:89-routing}, synchronizers \cite{awerbuch1990network}, broadcasting \cite{awerbuch1991cient,peleg2000distributed}, distance oracles \cite{thorup2005approximate}, graph sparsifiers \cite{kapralov2012spectral}, preconditioning of linear systems \cite{elkin2008lower}, etc.

The most common objective in spanners research is to achieve the best possible existential size-stretch trade-off, and to do this with algorithms that are as fast as possible.
Most notably, a landmark result of Alth\"ofer et al.~\cite{AlthoferDDJS:93} analyzed the following simple and natural greedy algorithm: given an $n$-node graph $G$ and an integer $k \geq 1$, consider the edges of $G$ in non-decreasing order of their weight and add an edge $(u,v)$ to the current spanner $H$ if and only if $\dist_{H}(u,v)>(2k-1) w(u,v)$.
They proved that this algorithm produces $(2k-1)$-spanners of \emph{existentially optimal size}: the spanner produced has size $O(n^{1+1/k})$, and (assuming the well-known \emph{Erd\H{o}s girth conjecture}~\cite{erdHos1964extremal}) there are graphs in which \emph{every} $(2k-1)$ spanner (and in fact every $2k$-spanner) has at least $\Omega(n^{1+1/k})$ edges.  

\subsection{Fault Tolerance}

A crucial aspect of real-life systems that is not captured by the standard notion of spanners is the possibility of \emph{failure}.
If some edges (e.g., communication links) or vertices (e.g., computer processors) fail, what remains of the spanner might not still approximate the distances of what remains of the original graph.
This motivates the notion of \emph{fault tolerant} spanners: 
\begin{definition} \label{def:FT}
A subgraph $H$ is an $f$-vertex fault tolerant ($f$-VFT) $t$-spanner of $G$ if
\begin{align}
d_{H \setminus F}(u,v) \leq t \cdot d_{G \setminus F}(u,v) \label{eq:fttest}
\end{align}
for all $u,v \in V$ and $F \subseteq V \setminus \{u,v\}$ with $|F| \leq f$.
\end{definition}
In other words, an $f$-VFT spanner contains a spanner of $G\setminus F$ for every set of $|F| \le f$ nodes that could fail.  The definition for edge fault tolerance (EFT) is equivalent, with the only change being that $F \subseteq E$ rather than $F \subseteq V \setminus \{u,v\}$.

Fault tolerant spanners were originally introduced in the geometric setting (where the vertices are points in $\mathbb{R}^d$ and the initial graph $G$ is the complete graph with Euclidean distances) by Levcopoulos, Narasimhan, and Smid \cite{levcopoulos1998efficient} and have since been studied extensively in that setting~\cite{LNS98,lukovszki1999new,czumaj2004fault,NS07}. 
Chechik, Langberg, Peleg and Roditty \cite{ChechikLPR:10} were the first to study fault-tolerant spanners in general graphs, giving a construction of an $f$-VFT $(2k-1)$-spanner of size approximately $O(f^2 k^{f+1} \cdot n^{1+1/k}\log^{1-1/k}n)$ and an
$f$-EFT $(2k-1)$-spanner of size $O(f\cdot n^{1+1/k})$.  So they showed that introducing tolerance to $f$ edge faults costs us an extra factor of $f$ in the size of the spanner, while introducing tolerance to $f$ vertex faults costs us a factor of $f^2 k^{f+1}$ in the size (compared to the size of a non-fault tolerant spanner of the same stretch).  

\renewcommand{\arraystretch}{1.5} 

\begin{table*}[t]
\begin{center}

    \begin{tabular}{llcl}
    \toprule
        \textbf{Spanner Size} & \textbf{Runtime} & \textbf{Greedy?} & \textbf{Citation} \\
    \midrule
        $\Oish \left(k^{O(f)} \cdot n^{1+1/k} \right)$  & $\Oish \left(k^{O(f)} \cdot n^{3+1/k} \right)$ & & \cite{ChechikLPR:10} \\
        $\Oish \left(f^{2 - 1/k} \cdot n^{1+1/k} \right)$ & {\color{blue} $\Oish\left( f^{2 - 2/k} \cdot m n^{1+1/k} \right)$} & & \cite{DinitzK:11}\\
        $O \left( \exp(k) f^{1 - 1/k} \cdot n^{1+1/k} \right) $ & $O\left( \exp(k) \cdot m n^{O(f)} \right)$ & \checkmark{} & \cite{BDPW18} \\
        \color{red} $O \left( f^{1 - 1/k} \cdot n^{1+1/k} \right)$ & $O\left( mn^{O(f)} \right)$ & \checkmark{} & \cite{BP19} \\
        $O \left( k f^{1 - 1/k} \cdot n^{1+1/k} \right)$ & \color{blue}{$\Oish\left( f^{2 - 1/k} \cdot m n^{1+1/k} \right)$} & (\checkmark{}) & \cite{DR20} \\
        \color{red} $O \left( f^{1 - 1/k} \cdot n^{1+1/k} \right)$ & \color{blue}$\Oish\left( f^{1 - 1/k} n^{2+1/k} + mf^2\right)$ & (\checkmark{}) & \textbf{(this paper)}\\
    \bottomrule
    \end{tabular}
    
    \caption{\label{tab:priorwork} Prior work on $f$-VFT $(2k-1)$-spanners of weighted input graphs on $n$ nodes and $m$ edges.  Size bounds in red are existentially optimal, and runtimes in blue are polynomial.  The (\checkmark{}) entries indicate a greedy algorithm with slack, as discussed below.  With $\Oish$ we hide factors of $\log n$ (or $k$, since we may assume $k \le \log n$).}
    \end{center}

\end{table*}

Since~\cite{ChechikLPR:10}, there has been a line of work focused on improving these bounds, particularly for vertex faults (see Table \ref{tab:priorwork}).  The first improvement was by~\cite{DinitzK:11}, who improved the bound for vertex faults to $O(f^{2-\frac{1}{k}}n^{1+\frac{1}{k}}\log{n})$ via a black-box reduction to non-fault tolerant spanners.
Following this, the area turned towards analyses of the \emph{FT-greedy algorithm}, the obvious extension of the greedy algorithm of~\cite{AlthoferDDJS:93} to the fault tolerant setting: 
look at the edges $(u, v)$ of the input graph $G$ in order of nondecreasing weight, and add $(u, v)$ to the spanner $H$ iff currently there exists a set of $|F| \le f$ faults such that (\ref{eq:fttest}) fails.
This algorithm was first analyzed by~\cite{BDPW18} who obtained a size bound of $O(\exp(k) f^{1-1/k} n^{1+1/k})$.  They also proved a lower bound of $\Omega(f^{1-1/k} n^{1+1/k})$ for $f$-VFT $(2k-1)$-spanners: assuming the girth conjecture of \Erdos{} \cite{erdHos1964extremal}, there are graphs which require that many edges for any $f$-VFT $(2k-1)$-spanner.
An improved analysis of the FT-greedy algorithm was then given by~\cite{BP19}, who remove the $\exp(k)$ factor and so proved that this algorithm gives existentially optimal VFT spanners. 

While the FT-greedy algorithm inherits some of the nice properties of the non-faulty greedy algorithm (such as simplicity, easy-to-prove correctness, and existential optimality), it unfortunately has serious issues in runtime.
The edge test in the FT greedy algorithm, i.e., whether or not there exists a fault set under which (\ref{eq:fttest}) holds, is an NP-hard problem known as \textsc{length-bounded cut} \cite{BEHKSS06}, and hence the algorithm inherently runs in exponential time.
Addressing this, a \emph{greedy algorithm with slack} was recently proposed in~\cite{DR20}.
This algorithm is an adaptation of the FT-greedy algorithm which replaces the exponential-time edge test with a different subroutine $\test(u, v)$, which accepts every edge $(u, v)$ where there exist $|F| \le f$ faults under which (\ref{eq:fttest}) fails, \emph{and possibly some other edges too}.
This slack maintains correctness and allows one to escape NP-hardness, but it introduces the challenge of bounding the number of additional edges added.
The approach in \cite{DR20} is to design an $O(k)$-approximation algorithm for \textsc{length-bounded cut} and use this in an efficiently computable $\test$ subroutine.
This gives a polynomial runtime, but pays the approximation ratio of $O(k)$ in spanner size over optimal.
So the result in~\cite{DR20} takes an important step forward (polynomial time) but also a step back (non-optimal size, by a factor of $O(k)$).

It thus remains an important open problem to design a polynomial time algorithm which obtains truly optimal size.
We note that $k$ factors are often considered particularly important for spanners, since the regime $k = \Theta(\log n)$ yields the sparsest possible spanners and hence arises commonly in algorithmic applications (see, e.g., \cite{BBGNSSS20} for a recent example), and here an extra factor of $k$ in the size of the spanner is significant.
Accordingly, for spanners and many related objects there has been significant effort expended to remove unnecessary factors of $k$.
It seems to often be the case that initial algorithms pay a factor of $k$, which can later be removed through more careful algorithms and analyses.
Our results fit into this tradition, addressing the remaining open question: can we get truly optimal-size fault tolerant spanners in polynomial time?

\subsection{Our Results and Techniques}
 
We answer this in the affirmative, giving both randomized and deterministic algorithms for constructing optimal-size fault-tolerant spanners.  More formally, we prove the following theorems.
\begin{theorem} \label{thm:randomized-main}
There is a randomized algorithm which runs in expected time
$$\Oish\left(f^{1-1/k} n^{2+1/k} + mf^2 \right)$$
which with high probability returns an $f$-VFT $(2k-1)$-spanner with $O(f^{1-1/k} n^{1+1/k})$ edges.
\end{theorem}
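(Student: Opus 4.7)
The plan is to follow the greedy-with-slack paradigm: process edges of $G$ in nondecreasing weight order, maintain a running spanner $H$, and for each candidate edge $(u, v)$ use a polynomial-time subroutine $\test(u, v, H)$ to decide inclusion. Correctness requires $\test$ to accept whenever some $|F| \le f$ witnesses $d_{H\setminus F}(u,v) > (2k-1)w(u,v)$; the remaining freedom (accepting edges the exact FT-greedy test would reject) is the ``slack'' that must be controlled. The \cite{DR20} analysis loses a factor of $O(k)$ in the spanner size because their deterministic $O(k)$-approximation to length-bounded cut can over-accept by too much; our target is a randomized $\test$ with only constant-factor slack, matching BP19's existential bound.

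The new ingredient I would introduce is a structured random sample of fault sets. For each candidate edge $(u,v)$, first use BFS-based augmenting paths in $H$ to build a maximum collection of internally vertex-disjoint $u$-$v$ paths of length $\le (2k-1)w(u,v)$. If fewer than $f+1$ such paths exist, accept $(u,v)$, since picking one internal vertex from each forms a fault set of size $\le f$ that witnesses the FT-greedy test. Otherwise, draw $\Oish(f)$ random fault sets $F_i$ biased toward the internal vertices of the discovered short paths, run BFS in $H\setminus F_i$ to check the stretch condition, and accept $(u,v)$ if any $F_i$ fails. A hitting-set argument should show that if a length-bounded cut of size $\le f$ exists in $H$, the structured samples hit one with high probability, giving the required completeness guarantee.

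For the size analysis, I would extend the BP19 blocker-based counting argument. Every accepted edge now carries a certificate --- either a blocker set of at most $f$ internal vertices covering every short $u$-$v$ path in $H$, or a sampled fault set explicitly verified by BFS --- and either certificate is enough to reproduce BP19's charging scheme at the $O(f^{1-1/k} n^{1+1/k})$ size bound. For the runtime: sorting contributes $O(m \log m)$; the per-edge disjoint-path and sampled-test work amortizes to $O(f^2)$ via BFS reuse across consecutive edges, summing to $\Oish(mf^2)$; and the BFSes on the growing spanner, which has at most $O(f^{1-1/k} n^{1+1/k})$ edges, contribute $\Oish(f^{1-1/k} n^{2+1/k})$ across $n$ sources, giving the stated total.

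I expect the hardest step to be making the size bound truly tight. Matching BP19 up to constants --- with no multiplicative $k$ factor --- requires showing that the randomized test does not over-accept edges on adversarial inputs by more than a constant, even though the test is forced to make decisions without solving the NP-hard length-bounded cut problem exactly. This in turn demands a delicate coupling between the probabilistic guarantees of the sampler and the combinatorial structure of short paths in the intermediate spanner $H$; getting the dependencies in the analysis to cancel cleanly, rather than leaving a residual factor of $k$ or $\log n$, is where the technical heart of the argument should lie.
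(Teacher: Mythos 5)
There is a genuine gap, at two levels. First, your $\test$ subroutine is not well-founded. Computing a maximum collection of internally vertex-disjoint $u$--$v$ paths of length at most $(2k-1)w(u,v)$ is itself NP-hard for general $k$ (this is the length-bounded disjoint paths problem), so the ``BFS-based augmenting paths'' step is not polynomial time as described. Worse, even granting it, the inference ``fewer than $f+1$ disjoint short paths $\Rightarrow$ one vertex per path is a size-$\le f$ fault set witnessing the FT-greedy test'' is false: Menger-type duality fails for length-bounded paths (which is exactly why \textsc{length-bounded cut} is hard), so killing one vertex on each path of a maximum disjoint collection need not destroy all short $u$--$v$ paths. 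And the ``hitting-set argument'' that $\Oish(f)$ biased random fault sets find a witnessing cut whenever one exists is unsupported --- there are $\binom{n}{f}$ candidate fault sets and no mechanism forcing a small sample to hit one.

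Second, and more fundamentally, you defer the size bound to ``extending the BP19 blocker-based counting argument,'' but that is precisely the step that is known to break for slack-greedy algorithms: the blocking set grows with the slack, which is exactly where \cite{DR20} loses its factor of $k$. The paper's actual resolution is to abandon blocking sets entirely and couple the algorithm to the analysis. It pre-samples $\Theta(f^3\log n)$ random induced subgraphs on $\sim n/(2f)$ vertices each, tests each edge by asking (via an incrementally maintained reachability structure on a $2k$-layered graph, using Italiano's data structure) whether $u$ and $v$ are at hop-distance $>2k-1$ in a constant fraction of the sampled subgraphs containing both, and accepts only then. Correctness follows because a constant fraction of the subgraphs containing $u,v$ avoid any fixed fault set $F$ of size $f$; the size bound follows because, \emph{by construction of the test}, every accepted edge survives with probability $\Theta(1/f^2)$ into a random sampled subgraph with all ``bad'' edges deleted, and that subgraph has girth $>2k$ on $O(n/f)$ vertices, so the Moore bound forces $|E(H)|/f^2 = O((n/f)^{1+1/k})$. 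Your proposal contains no analogue of this local, per-edge coupling, so the residual factor you worry about at the end would in fact remain. The runtime bound likewise comes from sampling the subsets once up front and doing $O(k)$ amortized-$O(kn/f)$ insertions per accepted edge, not from amortized BFS reuse.
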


\begin{theorem} \label{thm:deterministic-main}
There is a deterministic algorithm which constructs an $f$-VFT $(2k-1)$ spanner with at most $O(f^{1-1/k} n^{1+1/k})$ edges in time 
\[\tilde O\left(f^{4-1/k} n^{2+1/k} + mf^5\right).\] 
If $f = \poly(n)$ (i.e., $f \geq n^c$ for some constant $c > 0$) then the running time improves to 
\[\tilde O\left(f^{1-1/k} n^{2+1/k} + mf^2\right),\] 
where the polynomial exponent $c$ appears in both the spanner size and the running time but is hidden by the $O(\cdot)$ notation. 
\end{theorem}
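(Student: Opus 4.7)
The plan is to derandomize the algorithm of Theorem~\ref{thm:randomized-main}. That algorithm presumably uses random bits inside its edge-test subroutine: for each candidate edge $(u,v)$ considered in greedy order, randomness is used to decide whether $(u,v)$ must be added to $H$, with small failure probability per edge. The high-level goal is to eliminate the randomness while paying at most an $f^3$ factor overhead on the runtime in general, and no $f$-dependent overhead (only an $n^c$ factor absorbed into the big-$O$) when $f = \poly(n)$.

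I would apply the method of conditional expectations driven by a pessimistic estimator. Concretely: identify a ``bad event'' $B$ whose non-occurrence certifies that the algorithm returns a valid $f$-VFT $(2k-1)$-spanner with at most $O(f^{1-1/k} n^{1+1/k})$ edges, and show that $\Pr[B]$ can be bounded by a union of per-edge events each with a clean analytic upper bound (most naturally a Chernoff- or sampling-style bound inherited from the proof of Theorem~\ref{thm:randomized-main}). The pessimistic estimator $\Phi$ is then the sum of these analytic upper bounds, conditioned on the random bits fixed so far. Iteratively fixing each random bit to the value minimizing $\Phi$ is guaranteed to succeed because $\Phi$ starts below $1$, and the overall cost is (number of random bits) times (per-bit update cost of $\Phi$); calibrating these so that the product is $\tilde O(f^3)$ times the randomized runtime yields the stated general bound $\tilde O\left(f^{4-1/k} n^{2+1/k} + m f^5\right)$.

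For the $f = \poly(n)$ regime, a more efficient derandomization is available: enumerate a small deterministic sample space of size $n^{O(c)}$ --- for instance, the support of an explicit $t$-wise-independent family with $t$ chosen so that the concentration underlying Theorem~\ref{thm:randomized-main} still holds --- run the (now deterministic) algorithm on each sample point, and return the best output. Since the overhead is only an additional factor of $n^{O(c)}$ on both the size and running time, it is absorbed into the $O(\cdot)$ in the theorem statement. The key point is that the required independence depth is governed by $c$, not $f$, so the $f^3$ slowdown disappears and we recover the bound $\tilde O\left(f^{1-1/k} n^{2+1/k} + m f^2\right)$ with a $c$-dependent hidden constant.

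The main obstacle I expect is engineering the pessimistic estimator so that its per-bit update cost is small enough to produce exactly the $f^3$ overhead, rather than something larger. This will require the per-edge failure-probability bounds to admit a form that recomputes cheaply as bits are revealed --- ideally a closed-form moment or combinatorial counting bound --- so that updating $\Phi$ does not require re-running the edge test from scratch after each bit is fixed. Fitting this update inside roughly $\tilde O(f^2)$ amortized time per bit, while preserving the union-bound structure needed for correctness, is where the bulk of the technical work will sit.
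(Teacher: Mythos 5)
Your proposal misses the key structural fact that makes the derandomization tractable, and both of your proposed mechanisms break on it. In the fast randomized algorithm of Theorem~\ref{thm:randomized-main}, the randomness is \emph{not} spread across per-edge tests: it is used exactly once, in the preprocessing phase, to sample $\Theta(f^3\log n)$ vertex sets $V_i$; the rest of the algorithm is deterministic given those sets. The property these sets must satisfy (Lemma~\ref{lem:FS-random}) is that for \emph{every} edge $e$ and \emph{every} fault set $F$ with $|F|\le f$, a constant fraction of the sets containing both endpoints of $e$ avoid $F$ -- a union over $m\cdot n^{O(f)}$ events. A pessimistic estimator for this collection has $n^{O(f)}$ terms, so the method of conditional expectations gives no polynomial-time bound without a further idea for compressing the estimator; your proposal does not supply one, and the claimed $\tilde O(f^2)$ per-bit update cost is asserted rather than derived. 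Your fallback for $f=\poly(n)$ -- enumerate a small $t$-wise independent sample space, run the algorithm on each point, and ``return the best output'' -- fails for two reasons: the straightforward Chernoff-plus-union-bound over $n^f$ fault sets needs independence degree $\Omega(f\log n)$, not $O(1/c)$; and even setting that aside, you cannot select the ``best output,'' because verifying that a candidate subgraph is an $f$-VFT spanner is the NP-hard length-bounded cut problem, so you can check size but not correctness of the outputs you enumerate.

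The paper takes a different route that sidesteps all of this: it \emph{deterministically} constructs a set system with the needed properties, so no search over random bits or sample points is required and every fault set is handled simultaneously. Concretely, for each function $h$ in a $\delta$-almost universal hash family $\mathcal H:V\to[4\delta f]$ and each pair of hash values $\{y,z\}$, it forms the set $V_{h,\{y,z\}}=\{v: h(v)\in\{y,z\}\}$; Lemma~\ref{lem:FS-deterministic} shows via the pairwise collision bound and Markov's inequality that for every edge and every fault set $F$, at least a constant fraction of the sets containing the edge miss $F$ -- this is the compression of the $n^{O(f)}$ events down to a pairwise property that your estimator would have needed. Instantiating $\mathcal H$ with a small-bias-based family of size $\tilde O(f^4)$ gives the general bound $\tilde O(f^{4-1/k}n^{2+1/k}+mf^5)$, and when $f=\poly(n)$ a polynomial-evaluation (Reed--Solomon/MAC-style) family of size $O(f)$ with $\delta=\lceil\log|U|/\log|R|\rceil=O(1)$ gives $\tilde O(f^{1-1/k}n^{2+1/k}+mf^2)$. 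If you want to salvage your plan, the essential missing ingredient is precisely such an explicit set system (or an estimator expressible through pairwise collision counts); without it, neither of your two regimes goes through.
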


So if $f$ is subpolynomial in $n$ then our deterministic algorithm is slower than our randomized algorithm by about $f^3$, while if $f$ is polynomial in $n$ then we get determinism essentially for free (although the hidden polylogarithmic factors are larger in the deterministic case).  Recent work by Karthik and Parter~\cite{karthik2021deterministic} provides a better derandomization which gives the same $\tilde O\left(f^{1-1/k} n^{2+1/k} + mf^2\right)$ bound for all values of $f$.

To put these results in context, note that this is an \emph{exponential} improvement in running time over~\cite{BP19}, the only previous algorithm to give optimal-size fault tolerant spanners.  And unlike~\cite{DR20,DinitzK:11} it gives spanners with existentially optimal size, saving an $O(k)$ factor over~\cite{DR20} and an $O(f\log n)$ factor over~\cite{DinitzK:11}.  It is also polynomially faster than both~\cite{DR20,DinitzK:11}.  We note that not only is $k = \Theta(\log n)$ (and superconstant $k$ more generally) a particularly interesting regime (as discussed), large values of $f$ are also particularly interesting.  If we only ever think of $f$ as small then the dependence on $f$ in the size of the spanner does not matter much, but of course we are interested in protecting against as many faults as possible!  So our results are strongest (compared to previous work) precisely in one of the most interesting regimes for fault-tolerant spanners: $k = \Theta(\log n)$ and $f$ polynomially large in $n$.

\paragraph{Additional Properties.} Our algorithms and techniques have a few other properties that we briefly mention here, but which will not be a focus in the paper.
First, a corollary of our techniques and analysis is that we actually speed up the running time of the \emph{non-fault tolerant} greedy algorithm from $O(mn^{1+1/k})$ to $O(k n^{2+1/k})$.  Second, our algorithms and bounds continue to hold for edge fault tolerance, but for simplicity we will only discuss the VFT case.  These size bounds are also not known to be optimal for edge fault tolerance (except for spanners of stretch $3$) since the known lower bounds are weaker, making our results more interesting in the VFT setting.  See Section~\ref{sec:conclusion} for more discussion of edge fault tolerance.
Finally, since our algorithms are slack-greedy, they are \emph{unconditionally optimal}: even if the Erd\H{o}s girth conjecture is false, our algorithms still produce spanners of optimal size (whatever that is).

\subsubsection{Our First Algorithm \label{sec:firstsimple}}

Our first algorithm is a surprisingly simple randomized algorithm that, while not as efficient as the algorithm we will use to prove Theorem~\ref{thm:randomized-main}, achieves our main goal: it has polynomial running time and produces spanners of optimal size.  It also illustrates the main ideas that our more advanced algorithms (faster and/or deterministic) will utilize.
\begin{theorem} \label{thm:basic-main}
There is a randomized algorithm which, given an undirected weighted $n$-node graph and positive integers $f$ and $k$, runs in polynomial time
and with high probability returns an $f$-VFT $(2k-1)$-spanner with $O(f^{1-1/k} n^{1+1/k})$ edges.
\end{theorem}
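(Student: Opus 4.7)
I plan to instantiate the slack-greedy framework of~\cite{DR20} with a randomized polynomial-time test. Process the edges of $G$ in non-decreasing weight order, maintain a spanner $H$, and for each edge $(u,v)$ invoke a test $\test(u,v,H)$ that should answer, with high probability, the NP-hard question: does $H$ contain $\ge f+1$ internally vertex-disjoint $u$-$v$ paths of weight at most $(2k-1)w(u,v)$? Add $(u,v)$ to $H$ iff the answer is NO. By Menger's theorem this is precisely the FT-greedy rule, so if $\test$ agrees with the true answer on every edge, the output is an $f$-VFT $(2k-1)$-spanner of size $O(f^{1-1/k} n^{1+1/k})$ by~\cite{BP19}.

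The key observation that makes a tractable test possible is that, because we process edges in weight order, at the time $(u,v)$ is considered every edge of $H$ has weight at most $w(u,v)$, so any $u$-$v$ path using at most $2k-1$ edges is automatically \emph{short} in weight. My plan for $\test$ is color coding. Assign each vertex of $H$ an independent uniform color in $[L]$ with $L = 2k-2$, and call a path \emph{colorful} if its intermediate vertices use strictly distinct colors; colorful paths have at most $L+1 = 2k-1$ edges and are thus short. The maximum number of internally vertex-disjoint colorful $u$-$v$ paths can be computed in $n \cdot 2^{O(k)}$ time via a standard max-flow reduction on a layered state graph with vertices $(v, S)$ for $v \in V$ and $S \subseteq [L]$; under the assumption $k \le \log n$ noted in the caption of Table~\ref{tab:priorwork} this is polynomial. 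Run this with $T$ independent colorings and let $\test$ return YES (do not add) iff some trial witnesses $\ge f+1$ disjoint colorful paths.

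One direction of agreement with FT-greedy is immediate: since every colorful path is short, if $H$ truly has $\le f$ disjoint short paths then no trial can witness $f+1$ disjoint colorful ones, so $\test$ returns NO and $(u,v)$ is correctly added. The converse direction---that $\test$ does not \emph{spuriously} skip edges---requires showing that whenever $H$ genuinely has $\ge f+1$ disjoint short paths, with high probability some trial witnesses $\ge f+1$ disjoint colorful ones among them. Once this converse is in hand, a union bound over the $m$ edges converts per-edge guarantees into algorithm-wide agreement with FT-greedy, and the size analysis of~\cite{BP19} transfers verbatim.

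The hard part will be this converse. A fixed family of $f+1$ internally disjoint short paths is simultaneously colorful in a single trial with probability only $\Omega\bigl((L!/L^L)^{f+1}\bigr)$, which decays exponentially in $f$; a naive argument would therefore force $T$ to be exponential once $f$ is more than logarithmic. To push past this barrier I plan to insert a second independent layer of randomness that decorrelates the $f+1$ paths---for instance, randomly partitioning the vertices of $H$ into $f+1$ groups of equal expected size and running color coding within each group, so that with constant probability each of the target $f+1$ paths is routed through a distinct group and the ``colorful'' events become essentially independent across paths; or alternatively, extracting colorful subpaths across trials and stitching them into a globally disjoint collection by a matroid-intersection argument. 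Either route should reduce the per-trial failure probability to $1/\poly(n,f)$ and let $T = \poly(n,f)$ suffice, completing the polynomial running time. The bulk of the technical work is in this concentration step; everything else is a direct plug-in of the slack-greedy framework and the size analysis of~\cite{BP19}.
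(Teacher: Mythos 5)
There are two genuine gaps here, and the first one is fatal to the size bound even if the second were resolved. You write that testing for $\ge f+1$ internally vertex-disjoint $u$--$v$ paths of weight at most $(2k-1)w(u,v)$ is ``by Menger's theorem \ldots precisely the FT-greedy rule.'' It is not: Menger's theorem fails for length-bounded paths. The FT-greedy rule adds $(u,v)$ iff the minimum vertex cut hitting all short $u$--$v$ paths has size at most $f$, and the maximum number of disjoint short paths can be smaller than this minimum cut by a factor polynomial in $k$ --- this duality gap is exactly why \textsc{length-bounded cut} is NP-hard and why the approximation route of \cite{DR20} loses a factor of $O(k)$ in the spanner size. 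Your test (even if implemented exactly) therefore adds a strict superset of the FT-greedy edges. Correctness survives, but the size analysis of \cite{BP19} is tied to the exact FT-greedy algorithm and its witnessing fault sets, so it does not ``transfer verbatim''; bounding the surplus edges is precisely the open problem that this paper is solving, and your framework reintroduces it.

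The second gap is the one you flag yourself, and neither proposed repair works as stated. For a fixed family of $f+1$ disjoint short paths, a single coloring makes them all simultaneously colorful with probability $2^{-\Theta(kf)}$. A uniformly random partition of $V$ into $f+1$ groups does not decorrelate this: even one designated path has all $\Theta(k)$ of its internal vertices land in a single prescribed group only with probability $(f+1)^{-\Theta(k)}$, so the event that all $f+1$ paths are separated into distinct groups is again exponentially unlikely in $kf$, not constant; the matroid-intersection stitching is not specified enough to evaluate. The paper sidesteps both obstacles by abandoning any attempt to simulate FT-greedy exactly: its test samples $\Theta(\log n)$ induced subgraphs of the current spanner (keeping $u$, $v$, and each other vertex independently with probability $1/(2f)$) and adds the edge iff a constant fraction of these subgraphs have $d(u,v) > (2k-1)w(u,v)$. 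Correctness is a one-line calculation (a witnessing fault set $F$ is entirely avoided with probability $(1-1/(2f))^{f} \ge 1/2$), and --- the actual novelty --- the size bound is proved directly from the test rather than imported: every accepted edge is ``good'' in a random sampled subgraph with constant conditional probability, the good edges of such a subgraph form a girth-$>2k$ graph on $O(n/f)$ expected vertices, and comparing the Moore bound $O((n/f)^{1+1/k})$ against the lower bound $\Omega(|E(H)|/f^2)$ yields $|E(H)| = O(f^{1-1/k}n^{1+1/k})$ with no appeal to \cite{BP19} or to blocking sets. I would encourage you to redesign your test around a certificate whose presence both forces acceptance under a genuine fault set and can be counted in the size analysis, rather than around exact disjoint-short-path computation.
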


The main new ingredient is the following simple $\test(u, v)$ subroutine.  To test an edge $(u, v)$ we first randomly sample $\Theta(\log n)$ induced subgraphs of the current spanner, each of which is obtained by including $u$ and $v$ and then including each other node with probability $1/(2f)$.
Then we test \eqref{eq:nonfttest} in each of the sampled subgraphs, and we add $(u, v)$ to the spanner iff a large enough fraction of these subgraphs violate \eqref{eq:nonfttest}.
For correctness, one observes that if there exists any set of $|F|=f$ vertex deletions under which $\dist_{H \setminus F}(u, v)$ is large, then with high probability a large fraction of the subgraphs will delete all of $F$.
Hence $\dist(u, v)$ will be large in these subgraphs, and we will correctly include $(u, v)$ in the spanner.

The more interesting part of the argument is bounding the size of the output spanner.
This relates to the \emph{blocking set technique} introduced in \cite{BP19} and also used in~\cite{DR20}.  This technique uses the following observation: if the spanner $H$ is sufficiently dense, and one samples a random induced subgraph on $n/f$ nodes, then that subgraph will still be dense enough that it must have some $\le 2k$-cycles.
This statement is even somewhat robust, in that one will not be able to destroy all $\le 2k$-cycles by deleting only a constant fraction of remaining edges in the subgraph.
Thus one can certify \emph{sparsity} of the spanner by arguing that it is in fact possible to sample a subgraph, remove a constant fraction of the surviving edges, and destroy all $\le 2k$-cycles in the process.
In \cite{BP19, DR20}, one roughly predefines a small set of edges in the spanner (called the ``blocking set'') that intersects all $\le 2k$-cycles; the strategy is then to sample a subgraph and remove the parts of the blocking set that survive.
This approach works, but it is limited: if we use a slack FT-greedy algorithm (as in~\cite{DR20}) then the size of the blocking set increases with the slack, giving spanners that no longer have optimal size. 

Our main idea is to bypass blocking sets by more closely tying together the algorithm and the analysis.  The analysis of the blocking set technique uses random subgraph sampling, but this does not appear in the actual \emph{algorithm} of~\cite{DR20} or~\cite{BP19}.  Our new algorithm, by explicitly sampling subgraphs as part of the $\test(u,v)$ subroutine, is in some sense doing algorithmically exactly the \emph{minimum} needed for the analysis to work.  As is shown more formally in Section~\ref{sec:basic}, if an edge passes our new $\test(u,v)$ then \emph{by construction} the probability that it will have to be removed in the analysis in order to obtain high-girth is at most some constant less than $1$.  So getting our sampled subgraph to be high-girth in the analysis requires removing only a constant fraction of the remaining edges.  This is in contrast to the ideas behind blocking sets, where the blocking set is predefined by the algorithm and so an edge either has to be removed from the analysis subgraph (if it is part of the blocking set) or does not (if it isn't).  Hence another way to think of our idea is that we are moving from a \emph{global} analysis of which edges need to be removed (blocking sets, where edges are either in the set or not) to a \emph{local} analysis (where each edge has only a constant probability of being removed).

Details of this algorithm and analysis are given in Section \ref{sec:basic}.

\subsubsection{Our Faster Randomized Algorithm \label{sec:randomized-overview}}

As discussed, obtaining \emph{fast} algorithms for spanners -- not just any polynomial time -- is a long-standing and important line of research.
Once we are able to achieve polynomial time, we naturally want to minimize this time.
Our next algorithm (which achieves the running time bound of Theorem~\ref{thm:randomized-main}) optimizes the runtime in two ways:

\begin{enumerate}
    \item It is costly to randomly sample subgraphs anew in each round of the algorithm.
    A more efficient approach is to randomly sample vertex sets once at the very beginning, use these to perform the $\test$ in each round, and incrementally maintain the sampled subgraphs as edges are added to the spanner.
    We show that this approach works as long as we sample $\Theta(f^3 \log n)$ total subgraphs in the beginning.
    
    \item Since we incrementally maintain a fixed set of subgraphs, to test (\ref{eq:nonfttest}) on each subgraph we can use an appropriate incremental data structure rather than computing from scratch each time.
    The obvious way of doing this requires using an incremental dynamic distance oracle / APSP algorithm, but unfortunately known constructions are not fast enough for our purposes.  However, we can use an idea from \cite{RZ11}: it suffices to solve a certain relaxed version of this problem.
    Specifically, instead of measuring $\dist(u, v)$ exactly; it suffices for our oracle to simply decide whether $\dist(u, v) > (2k-1)w(u, v)$ in each subgraph.
    In the setting of unweighted input graphs, this easily reduces to a problem of \emph{reachability} (rather than distance) on a $(k+1)$-layered version of the subgraph, and we can use an observation from~\cite{DR20} to make this reduction work for weighted input graphs as well thanks to the fact that our framework is a slack version of the greedy algorithm.
    We can then use a classical data structure for incremental reachability by Italiano \cite{Italiano86}.
\end{enumerate}

Together, these improvements give the runtime listed in Theorem \ref{thm:randomized-main}.
Details are given in Section \ref{sec:randomized}.

\subsubsection{Our Deterministic Algorithm}

Our second improvement is to regain determinism.  Both the exponential time algorithm of~\cite{BDPW18} and the polynomial time algorithm of~\cite{DR20} are deterministic, while the core ideas of our previous two algorithms seem to require randomization (particularly our first, non-optimized algorithm).  But by constructing set systems with specific properties through the use of almost-universal hash functions, we are able to derandomize the algorithm of Theorem~\ref{thm:randomized-main}.

The main idea is to leverage the fact that our fast randomized algorithm samples vertex sets only once at the very beginning.  By examining the proof of Theorem~\ref{thm:randomized-main} we can determine what properties we need these sets to have.  Informally, we need that for every $(u,v) \in E$ there are not many sets containing both $u$ and $v$, and that for every $(u,v) \in E$ and $F \subseteq V \setminus \{u,v\}$ with $|F| \leq f$, a constant fraction of the sets which contain both $u$ and $v$ do not contain any vertex in $F$ (note that this guarantee has to hold simultaneously for all possible fault sets).  So we just need to give a deterministic construction of such a set system.  We show how to do this by building an (almost-)universal hash family from $V$ to $[\Theta(f)]$, and for each hash function in the family creating $\Theta(f^2)$ sets based on pairs of hash values.  Interestingly, we are able to tolerate relatively large values of ``almost'': our spanner construction still works even if the universal hashing guarantee is violated by large constants.  Most of the literature on hashing, on the other hand, is optimized for the case of only a $(1+\epsilon)$ violation.  By taking advantage of our ability to withstand weaker hashing guarantees, we can design an extremely small hash family based on \emph{Message Authentication Codes} from cryptography, which is a standard and classical construction but to the best of our knowledge has not previously been used in the context of hashing or derandomization.  This allows us to obtain running time that is essentially the same as our fast randomized algorithm when $f$ is polynomial in $n$.  

These ideas give the deterministic runtime in Theorem \ref{thm:deterministic-main}.
Details are given in Section \ref{sec:deterministic}.  Independently, Karthik and Parter~\cite{karthik2021deterministic} used similar ideas but in a more sophisticated manner to provide improved derandomizations for a number of related combinatorial objects, and their techniques when applied to our algorithm make it possible to remove the restriction that $f$ is at least polynomial in $n$.

\section{Preliminaries and Notation} \label{sec:notation}

We will use $\tilde O(\cdot)$ to suppress polylogarithmic (in $n$) factors.  For any integer $a \geq 1$, let $[a] = \{1,2,\dots, a\}$.  Given an edge-weighted graph $G = (V, E, w)$, let $d_G(u,v)$ denote the shortest-path distance from $u$ to $v$ in $G$ according to the weight function $w$ and let $d^*_G(u,v)$ denote the unweighted distance (minimum number of hops) from $u$ to $v$ in $G$.

Many of our algorithms are randomized, and so we make claims that hold \emph{with high probability}.  Formally, this means that they hold with probability at least $1 - 1/n$.\footnote{By changing the constants in the algorithm/analysis, all high probability claims we make can be made to hold with probability at least $1-1/n^c$ for any constant $c$.  We choose $c=1$ only for simplicity.}

We will use the following Chernoff bounds (see~\cite{DP09}).
\begin{theorem} \label{thm:Chernoff}
    Let $X = \sum_{i=1}^{n} X_i$, where $X_i$ ($i \in [n]$) are independently distributed in $[0,1]$.  Then:
    \begin{itemize}
        \item For $0 < \epsilon < 1$:  $\Pr[X < (1-\epsilon) \E[X]] \leq \exp((-\epsilon^2 / 2) \E[X])$ and $\Pr[X > (1+\epsilon) \E[X]] \leq \exp((-\epsilon^2 / 3) \E[X])$.
        \item If $t > 2 e \E[X]$, then $\Pr[X > t] < 2^{-t}$.
    \end{itemize}
\end{theorem}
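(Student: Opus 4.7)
The plan is to use the standard moment-generating-function (Chernoff/Bernstein) method, i.e., apply Markov's inequality to $e^{tX}$ and then optimize the parameter $t$. Let $\mu = \E[X]$. For the upper tail, for any $t > 0$,
\[
\Pr[X > (1+\epsilon)\mu] \;=\; \Pr\!\left[e^{tX} > e^{t(1+\epsilon)\mu}\right] \;\leq\; \frac{\E[e^{tX}]}{e^{t(1+\epsilon)\mu}}.
\]
By independence, $\E[e^{tX}] = \prod_i \E[e^{tX_i}]$. Since each $X_i \in [0,1]$, the function $x \mapsto e^{tx}$ is convex, so $e^{tx} \leq 1 + x(e^t - 1)$ on $[0,1]$. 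Taking expectations and using $1+y \leq e^y$, I get $\E[e^{tX_i}] \leq \exp(\E[X_i](e^t - 1))$, and hence $\E[e^{tX}] \leq \exp(\mu(e^t - 1))$.

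Next I would plug in the ``right'' choice $t = \ln(1+\epsilon)$, yielding the classical clean bound
\[
\Pr[X > (1+\epsilon)\mu] \;\leq\; \left(\frac{e^{\epsilon}}{(1+\epsilon)^{1+\epsilon}}\right)^{\!\mu}.
\]
To convert this into the form $\exp(-\epsilon^2 \mu/3)$ stated in the theorem, I would take logarithms and use the Taylor expansion $(1+\epsilon)\ln(1+\epsilon) = \epsilon + \epsilon^2/2 - \epsilon^3/6 + \cdots$ to check that $\epsilon - (1+\epsilon)\ln(1+\epsilon) \leq -\epsilon^2/3$ on $(0,1)$; this is the routine but genuinely delicate step where the constant $1/3$ (rather than the tighter $1/2$) is chosen exactly so the inequality holds on the whole interval. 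The lower-tail bound $\Pr[X < (1-\epsilon)\mu] \leq \exp(-\epsilon^2\mu/2)$ is symmetric: repeat the argument with $t < 0$ (equivalently, bound $\E[e^{-tX}]$), choose $t = -\ln(1-\epsilon)$, and then verify $-\epsilon - (1-\epsilon)\ln(1-\epsilon) \leq -\epsilon^2/2$ on $(0,1)$, which is cleaner because the Taylor series has all negative-sign terms working in our favor.

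Finally, for the deviation bound when $t > 2e\mu$, I would not re-optimize from scratch but instead return to the raw inequality $\Pr[X > t] \leq e^{-st}\,\E[e^{sX}] \leq \exp(\mu(e^s-1) - st)$ and pick $s = \ln(t/\mu)$, yielding $\Pr[X > t] \leq (e\mu/t)^t$. The hypothesis $t > 2e\mu$ makes $e\mu/t < 1/2$, giving $\Pr[X > t] < 2^{-t}$. The main obstacle in the whole plan is not conceptual (the MGF method is textbook) but rather the bookkeeping to extract precisely the stated constants $1/2$, $1/3$, and the $2e$ threshold; each of these comes from tightening a Taylor expansion over a specific range of $\epsilon$ or $t/\mu$, and getting them right requires careful case analysis rather than a single clean computation.
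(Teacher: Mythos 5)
Your proposal is correct, and it is the standard moment-generating-function derivation of these bounds. Note, however, that the paper does not prove Theorem~\ref{thm:Chernoff} at all --- it is stated as a known result and cited to Dubhashi--Panconesi --- so there is no "paper proof" to compare against. Your argument fills that in faithfully: the convexity bound $e^{tx} \leq 1 + x(e^t-1)$ on $[0,1]$ handles the general (not necessarily Bernoulli) $X_i$, the choices $t = \ln(1+\epsilon)$ and $t = \ln(1-\epsilon)$ give the multiplicative bounds, the inequalities $\epsilon - (1+\epsilon)\ln(1+\epsilon) \leq -\epsilon^2/3$ and $-\epsilon - (1-\epsilon)\ln(1-\epsilon) \leq -\epsilon^2/2$ on $(0,1)$ are exactly the standard Taylor-series verifications you describe, and the choice $s = \ln(t/\mu)$ gives $\Pr[X>t] \leq (e\mu/t)^t < 2^{-t}$ once $e\mu/t < 1/2$. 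All steps check out.
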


We will use the following structural lemma about fault-tolerant spanners, which was given explicitly in \cite{DR20} but appeared implicitly in essentially all previous papers on fault-tolerant spanners.  It essentially says that we only have to worry about spanning edges (not all pairs of nodes), and only edges for which the shortest path between the endpoints is the edge itself.

\begin{lemma} \label{lem:structure}
Let $G = (V, E)$ be a graph with weight function $w$ and let $H$ be a subgraph of $G$.  Then $H$ is an $f$-VFT $t$-spanner of $G$ if and only if $d_{H \setminus F}(u,v) \leq t \cdot w(u,v)$ for all $F \subseteq V$ with $|F| \leq f$ and $u,v \in V \setminus F$ such that $(u,v) \in E$.
\end{lemma}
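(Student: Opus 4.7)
The plan is to prove both directions of the equivalence directly, with the forward direction being essentially a definitional check and the reverse direction being a simple triangle-inequality argument along a shortest path.

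For the forward direction, suppose $H$ is an $f$-VFT $t$-spanner of $G$. Fix any edge $(u,v) \in E$ and any $F \subseteq V$ with $|F| \le f$ and $u,v \notin F$. By Definition~\ref{def:FT}, $d_{H \setminus F}(u,v) \le t \cdot d_{G \setminus F}(u,v)$. Since $(u,v) \in E$ survives in $G \setminus F$, we have $d_{G \setminus F}(u,v) \le w(u,v)$, and combining these two inequalities gives the claim. This direction is essentially immediate.

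For the reverse direction, assume the edge condition $d_{H \setminus F}(u,v) \le t \cdot w(u,v)$ holds for every surviving edge endpoint pair. I would fix an arbitrary pair $u,v \in V$ and an arbitrary fault set $F \subseteq V \setminus \{u,v\}$ with $|F| \le f$, and verify the general spanner inequality. Take any shortest $u$-to-$v$ path in $G \setminus F$, say $u = x_0, x_1, \dots, x_\ell = v$, so its length equals $d_{G \setminus F}(u,v) = \sum_{i=1}^{\ell} w(x_{i-1}, x_i)$. Each $x_i$ lies in $V \setminus F$ (because the path avoids $F$), and each $(x_{i-1}, x_i)$ is an edge of $G$, so the hypothesis applies to each consecutive pair: $d_{H \setminus F}(x_{i-1}, x_i) \le t \cdot w(x_{i-1}, x_i)$. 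By the triangle inequality in $H \setminus F$,
\[
d_{H \setminus F}(u, v) \;\le\; \sum_{i=1}^{\ell} d_{H \setminus F}(x_{i-1}, x_i) \;\le\; t \sum_{i=1}^{\ell} w(x_{i-1}, x_i) \;=\; t \cdot d_{G \setminus F}(u,v),
\]
which is exactly \eqref{eq:fttest}. There is no real obstacle here; the one thing to be careful about is the case $d_{G \setminus F}(u,v) = \infty$ (i.e., $u$ and $v$ are disconnected in $G \setminus F$), which is handled trivially since the inequality then holds vacuously, and the case $u = v$, where both sides are zero. Beyond those edge cases, the proof is purely bookkeeping.
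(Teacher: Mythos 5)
Your proof is correct and is exactly the standard argument that the paper defers to by citing~\cite{DR20}: the forward direction is immediate from $d_{G\setminus F}(u,v)\le w(u,v)$ for a surviving edge, and the reverse direction follows by subdividing a shortest path in $G\setminus F$ into surviving edges and applying the triangle inequality in $H\setminus F$. You also correctly handle the vacuous cases ($u=v$ and $d_{G\setminus F}(u,v)=\infty$), so there is nothing to add.
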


\section{Optimal Fault-Tolerant Spanners in Polynomial Time} \label{sec:basic}

In this section we resolve the main open question left by~\cite{DR20,BP19, BDPW18} by proving Theorem~\ref{thm:basic-main}: we give a polynomial time algorithm which constructs optimal-size vertex fault tolerant spanners.  As discussed in Section~\ref{sec:firstsimple}, the algorithm itself is quite simple: we just use the greedy algorithm but test whether to include an edge by sampling subgraphs and checking whether the distance between the two endpoints is too large.
This algorithm is given formally as Algorithm~\ref{ALG:basic}.

\begin{algorithm}
\caption{Basic $f$-VFT $(2k-1)$-Spanner Algorithm}
\label{ALG:basic}
\begin{algorithmic}[1]
\REQUIRE Graph $G = (V, E)$ on $n$ nodes, edge weights $w : E \rightarrow \mathbb{R}^+$, integers $k \geq 1$ and $f \geq 1$.
\STATE $H\leftarrow (V, \emptyset)$
\FORALL{$e = (u,v)\in E$ in nondecreasing weight order}
    \STATE{Sample $\alpha = c \log n$ subgraphs $\{\widehat{H^i_e} \subseteq H\}_{i \in [\alpha]}$, where each is an induced subgraph on a vertex set obtained by including $u$ and $v$, and then each other node independently with probability $1/(2f)$.}
    \STATE{Let $\widehat{P_e}$ be the fraction of these subgraphs in which $d_{\widehat{H^i_e}}(u, v) > (2k-1) \cdot w(u,v)$.}
    \IF{$\widehat{P_e} \ge 1/4$} \STATE Add $e$ to $H$ \ENDIF 

\ENDFOR
\RETURN H
\end{algorithmic}
\end{algorithm}

The following definitions will be useful in our analysis.
Let $H$ be the final spanner, and let $H'$ be an induced subgraph of $H$ obtained by including each node, independently, with probability $1/(2f)$ (note that $H'$ is only an analytical tool, not part of the algorithm).
Let $H_e$ ($H'_e$) denote the subgraph of $H$ ($H'$) containing only the edges considered strictly before $e$ in the algorithm.
Let
$$P_{e=(u, v)} := \Pr\left[ d_{H'_e}(u, v) > (2k-1)\cdot w(u, v) \mid u, v \in V(H') \right]$$
where the probability is over the random construction of $H'$.
So $\widehat{P_e}$ in the algorithm is an experimental estimate of $P_e$, and we can bound its accuracy as follows:
\begin{lemma} \label{lem:accuracy}
With high probability, for every edge $e \in E$, we have $\widehat{P_e} \in P_e \pm 1/8$. 
\end{lemma}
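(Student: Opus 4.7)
The plan is a standard concentration plus union bound argument. First I would establish that each sampled subgraph $\widehat{H^i_e}$ in the algorithm is drawn from exactly the same distribution as $H'_e$ conditioned on the event $\{u,v \in V(H')\}$. Indeed, at the moment edge $e$ is tested, the current spanner is $H_e$ (the edges accepted in previous iterations, which are precisely the edges of the final $H$ considered strictly before $e$). The algorithm then forms each $\widehat{H^i_e}$ as the induced subgraph of $H_e$ on a random vertex set that contains $u,v$ and each other vertex independently with probability $1/(2f)$, which matches the description of $H'_e$ conditioned on the inclusion of $u$ and $v$. Consequently, if we set $X_i := \mathbf{1}[\,d_{\widehat{H^i_e}}(u,v) > (2k-1)\,w(u,v)\,]$, then conditional on $H_e$ the variables $X_1,\ldots,X_\alpha$ are i.i.d.\ Bernoulli with mean exactly $P_e$, and $\widehat{P_e} = \tfrac{1}{\alpha}\sum_{i=1}^\alpha X_i$ is their empirical average.

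Next I would apply an additive concentration inequality. Since the $X_i$ are i.i.d.\ and bounded in $[0,1]$, a Hoeffding-type bound yields
\[
\Pr\bigl[\,|\widehat{P_e} - P_e| > 1/8 \,\bigm|\, H_e\,\bigr] \;\le\; 2\exp(-\alpha/32) \;=\; 2\,n^{-c/32},
\]
and I would choose the constant $c$ in $\alpha = c\log n$ large enough that this is at most $n^{-4}$. A minor point worth flagging is that one cannot apply the multiplicative Chernoff bound stated in Theorem~\ref{thm:Chernoff} directly, because it degrades when $P_e$ is very small (or zero); an additive tail bound, which holds uniformly for all $P_e \in [0,1]$, is the cleanest tool. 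One can also derive the needed additive bound from Theorem~\ref{thm:Chernoff} by splitting into the cases $P_e \le 1/16$ (handled by the upper tail with $t = \alpha/8$, which easily exceeds $2e\,\E[X]$) and $P_e > 1/16$ (handled by the multiplicative form with a constant-size $\epsilon$).

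Finally I would union bound over all $m \le \binom{n}{2}$ edges $e \in E$. Since the per-edge bound above holds conditional on any realization of $H_e$, it holds unconditionally, and the union bound gives an overall failure probability of at most $n^2 \cdot n^{-4} = n^{-2}$, which is $o(1/n)$ as required for a high-probability statement. The main (and quite mild) subtlety, rather than any real obstacle, is simply keeping the conditioning on the algorithm's history straight: across different edges the samples $\widehat{H^i_e}$ are not independent because $H_e$ depends on previously tested edges, but the concentration at edge $e$ is taken conditional on that history, and the union bound over $e$ is then immediate.
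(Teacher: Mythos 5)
Your proposal is correct and follows essentially the same route as the paper: write $\widehat{P_e}$ as the empirical mean of $\alpha = c\log n$ i.i.d.\ indicators with mean exactly $P_e$ (since each $\widehat{H^i_e}$ is distributed as $H'_e$ conditioned on $u,v\in V(H')$), apply a Chernoff-type tail bound, and union bound over the $m\le n^2$ edges. The only difference is that you use an additive Hoeffding bound where the paper uses the multiplicative form of Theorem~\ref{thm:Chernoff}; your choice is arguably cleaner, since the multiplicative bound's exponent $\exp(-\alpha P_e/128)$ is weak when $P_e$ is small (there the one-sided bound $\widehat{P_e}\ge P_e-1/8$ is trivial anyway), and your remarks about conditioning on the history $H_e$ before taking the union bound are a correct, if implicit in the paper, bookkeeping point.
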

\begin{proof}
We will prove the lower bound $\widehat{P_e} \ge P_e - 1/8$; the upper bound is essentially identical.
The random variable $\alpha \widehat{P_e}$ is the sum of $\alpha$ random variables $\left\{\widehat{P_e^i} \right\}$, where
\[
\widehat{P_e^i} := \begin{cases}
0 & \text{if } d_{\widehat{H^i_e}}(u, v) \le (2k-1) \cdot w(u, v) \\
1 & \text{if }d_{\widehat{H^i_e}}(u, v) > (2k-1) \cdot w(u, v).
\end{cases}
\]
Thus we may apply Chernoff bounds (Theorem~\ref{thm:Chernoff}), giving:
\begin{align*}
   \Pr\left[\alpha \widehat{P_e} < \alpha (P_e - 1/8)\right] &= \Pr\left[\alpha \widehat{P_e} < \E\left[\alpha \left(\widehat{P_e} - 1/8 \right)\right]\right] \le \Pr\left[\alpha \widehat{P_e} < \frac78 \E\left[\alpha \widehat{P_e}\right]\right]\\
    &\leq e^{-\frac{(1/8)^2}{2} \alpha \E\left[\widehat{P_e}\right]} \leq e^{-\alpha / 128}
\end{align*}
where the first equality follows by linearity of expectation, and the fact that by construction $P_e = \E\left[\widehat{P_e}\right]$.
If we set $\alpha \ge 128 \cdot 3 \ln n$, then the probability is at most $1/n^3$.
So, by a union bound over the $m \le n^2$ edges in the input graph, the probability that the lower bound fails for \emph{any} edge is at most $1/n$, proving the lemma.
\end{proof}

We are now ready to prove the properties of Algorithm \ref{ALG:basic}.

\begin{lemma} \label{lem:basic-correct}
With high probability, Algorithm~\ref{ALG:basic} returns an $f$-VFT $(2k-1)$-spanner.
\end{lemma}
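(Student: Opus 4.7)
The plan is to prove the lemma by contradiction, using Lemma~\ref{lem:structure} to restrict attention to spanning edges and Lemma~\ref{lem:accuracy} to relate the empirical quantity $\widehat{P_e}$ used by the algorithm to the true probability $P_e$. First I would condition on the high-probability event from Lemma~\ref{lem:accuracy} that $|\widehat{P_e} - P_e| \le 1/8$ for every $e$. By Lemma~\ref{lem:structure}, it suffices to check that for every edge $e = (u,v) \in E$ and every fault set $F$ of size at most $f$ with $u,v \notin F$, we have $d_{H \setminus F}(u,v) \le (2k-1)\,w(u,v)$. If Algorithm~\ref{ALG:basic} added $e$ to $H$, then $(u,v) \in H \setminus F$ and the bound is trivial.

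The substantive case is when $e$ was not added, so $\widehat{P_e} < 1/4$, and hence $P_e < 3/8$. Suppose for contradiction that some fault set $F$ with $|F| \le f$ and $u,v \notin F$ satisfies $d_{H \setminus F}(u,v) > (2k-1)\,w(u,v)$. Since $H_e \subseteq H$, deleting edges only increases distances, so $d_{H_e \setminus F}(u,v) > (2k-1)\,w(u,v)$ as well. Now I would lower-bound $P_e$: conditioned on $u,v \in V(H')$, each remaining vertex is still in $V(H')$ independently with probability $1/(2f)$, so the event $V(H') \cap F = \emptyset$ occurs with probability
\[
\left(1 - \frac{1}{2f}\right)^{|F|} \;\ge\; \left(1-\frac{1}{2f}\right)^{f} \;\ge\; \frac12,
\]
the last bound being minimized at $f = 1$. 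Whenever this event holds, $H'_e$ is an induced subgraph of $H_e$ on a vertex set disjoint from $F$ and containing $u,v$, so it is a subgraph of $H_e \setminus F$; hence $d_{H'_e}(u,v) \ge d_{H_e \setminus F}(u,v) > (2k-1)\,w(u,v)$. Therefore $P_e \ge 1/2$, and by Lemma~\ref{lem:accuracy} we get $\widehat{P_e} \ge 1/2 - 1/8 = 3/8 > 1/4$, contradicting the fact that $e$ was not added.

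The main obstacle is keeping the direction of monotonicity straight when relating $H'_e$ to $H_e \setminus F$. The point is that conditioning on $F \cap V(H') = \emptyset$ makes $H'_e$ an induced subgraph of $H_e$ on a vertex subset of $V \setminus F$, so it has strictly fewer edges than $H_e \setminus F$ and therefore its distances are only larger, not smaller; that is what lets the large distance $d_{H_e \setminus F}(u,v) > (2k-1)w(u,v)$ transfer to a lower bound on $P_e$ rather than an upper bound. The other minor care needed is to verify the elementary inequality $(1 - 1/(2f))^f \ge 1/2$ uniformly in $f \ge 1$, which in turn is what makes the threshold $1/4$ in the algorithm (together with the $\pm 1/8$ slack from Lemma~\ref{lem:accuracy}) the correct choice.
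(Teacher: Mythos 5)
Your proof is correct and follows essentially the same route as the paper's: both reduce to edges via Lemma~\ref{lem:structure}, lower-bound $P_e \ge (1-1/(2f))^f \ge 1/2$ by the event that all of $F$ is excluded from $H'$, and invoke Lemma~\ref{lem:accuracy} to conclude $\widehat{P_e} > 1/4$. You merely phrase it as a contrapositive/contradiction and make explicit the monotonicity step from $H$ back to $H_e$, which the paper leaves implicit.
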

\begin{proof}
Let $e = (u, v)$ be an edge considered by the algorithm, and suppose there exists a fault set
$F \subseteq V \setminus \{u,v\}$ with $|F| \leq f$ such that
$$d_{H_e \setminus F}(u,v) > (2k-1) \cdot w(e).$$
In the event that $H'_e$ contains $u, v$ but it does not contain any node in $F$, we thus also have
$$d_{H'_e}(u, v) > (2k-1) \cdot w(e).$$
Thus $P_e$ is \emph{at least} the probability that none of the nodes in $F$ survive in $H'_e$, which we may bound:
\[
P_e \ge (1-p)^{|F|} \geq \left(1- \frac{1}{2f}\right)^f  \geq 1/2.
\]
By Lemma \ref{lem:accuracy}, with high probability we have $\widehat{P_e} \ge P_e - 1/8 \ge 1/2 - 1/8 > 1/4$, and so we add $e$ to $H$ in the algorithm.
So for any edge $e$ \emph{not} added to the spanner, no such fault set $F$ exists.
It then follows from Lemma \ref{lem:structure} that $H$ is an $f$-VFT $(2k-1)$-spanner.
\end{proof}

\begin{lemma} \label{lem:basic-size}
    With high probability, $|E(H)| \leq O\left(f^{1-1/k} n^{1+1/k} \right)$. 
\end{lemma}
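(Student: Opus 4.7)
The plan is to use a Moore-bound style argument in the spirit of \cite{BP19, DR20}, but to exploit the stronger guarantee our randomized test provides: by Lemma~\ref{lem:accuracy}, with high probability over the algorithm's randomness, every edge $e=(u,v)\in H$ satisfies $P_e \geq 1/8$. Condition on this event and treat $H$ as fixed. I would then introduce the purely analytical induced subgraph $H' \subseteq H$ (as defined just before Lemma~\ref{lem:accuracy}) obtained by including each vertex of $V$ independently with probability $p := 1/(2f)$, and let $H'_e$ denote the edges of $H'$ considered before $e$ in the greedy order.

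Call an edge $e = (u,v) \in H'$ \emph{redundant} if $d_{H'_e}(u,v) \leq (2k-1) w(u,v)$, and let $H''$ be $H'$ with all redundant edges deleted. Since $H''_e \subseteq H'_e$, every surviving $e = (u,v) \in H''$ satisfies $d_{H''_e}(u,v) > (2k-1) w(u,v)$, which is exactly the greedy $(2k-1)$-spanner condition on the vertex set $V(H')$. The standard Alth\"ofer--Erd\H{o}s girth argument thus yields the deterministic bound $|E(H'')| \leq C \cdot |V(H')|^{1+1/k}$ for an absolute constant $C$. On the other hand, for each $e=(u,v)\in H$ the event $\{e \in H''\}$ is exactly $\{u,v \in V(H')\text{ and }e\text{ not redundant}\}$, which by independence of the two factors in the definition of $P_e$ has probability $p^2 \cdot P_e \geq p^2/8$. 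Summing over $e \in H$,
\[
\E_{H'}\bigl[|E(H'')|\bigr] \ \geq\ \frac{p^2}{8}\,|E(H)|.
\]
Combining these two inequalities yields the deterministic bound $\tfrac{p^2}{8}|E(H)| \leq C \cdot \E_{H'}\bigl[|V(H')|^{1+1/k}\bigr]$, and substituting the estimate $\E[|V(H')|^{1+1/k}] = O((n/f)^{1+1/k})$ together with $p^{-2} = 4f^2$ gives $|E(H)| \leq O(f^{1-1/k} n^{1+1/k})$.

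The step I expect to require the most care is controlling $\E[|V(H')|^{1+1/k}]$: because $x^{1+1/k}$ is convex, Jensen's inequality gives only a \emph{lower} bound on this moment. I would instead truncate using Chernoff (Theorem~\ref{thm:Chernoff}) at, say, $2np$ for the contribution on $\{|V(H')| \leq 2np\}$ and use the deterministic bound $|V(H')|\leq n$ on the complementary event, which is exponentially unlikely as long as $np = \Omega(\log n)$, i.e.\ $f = O(n/\log n)$. The complementary regime, where $f$ is within a polylogarithmic factor of $n$, must be handled separately, for instance by noting that there the claimed size bound is already $\tilde\Omega(n^2)$ and so follows from the trivial bound $|E(H)| \leq \binom{n}{2}$. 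Crucially, once we have conditioned on the high-probability event of Lemma~\ref{lem:accuracy}, the entire counting argument is a deterministic inequality on $|E(H)|$ whose right-hand side is an expectation only over the analytical $H'$, so no additional concentration is needed to convert an in-expectation bound to the claimed high-probability bound on $|E(H)|$.
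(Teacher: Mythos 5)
Your argument is the paper's argument: the same analytical subgraph $H'$, the same deletion of ``redundant'' (the paper says ``bad'') edges to obtain a girth-$>2k$ subgraph $H''$, the same two-sided comparison of $\E[|E(H'')|]$ (Moore bound from above, $p^2 P_e \geq p^2/8$ per edge of $H$ from below, using the conditioning on Lemma~\ref{lem:accuracy} to get $P_e \geq 1/8$), and the same rearrangement at the end. All of that is correct. The one place you diverge is the step you yourself flag as delicate, namely replacing $\E\bigl[|V(H')|^{1+1/k}\bigr]$ by $O\bigl((n/f)^{1+1/k}\bigr)$; you are right that Jensen points the wrong way here. The paper handles this with a short variance comparison, bounding $\bigl|\E[X^{1+1/k}] - \E[X]^{1+1/k}\bigr|$ by $\mathrm{Var}[X] = O(n/f)$ and absorbing that into the main term, which covers all parameter regimes uniformly; your truncate-at-$2np$-plus-Chernoff route is a reasonable alternative and works whenever $n/(2f) = \Omega(\log n)$.

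The gap is in your fallback for the remaining regime. When $f$ is within a polylogarithmic factor of $n$, the target bound $f^{1-1/k}n^{1+1/k}$ is only $\tilde\Omega(n^2)$, i.e.\ $\Omega(n^2/\mathrm{polylog}(n))$, and that is \emph{not} implied by $|E(H)| \leq \binom{n}{2}$: for $k=2$ and $f = n/\log n$ the lemma asserts $|E(H)| = O(n^2/\sqrt{\log n})$, which is strictly stronger than the trivial $n^2/2$. Nor does moving the truncation threshold to $\Theta(\log n)$ rescue this range: it yields $\E[|V(H')|^{1+1/k}] = O(\log^{1+1/k} n)$ and hence $|E(H)| = O(f^2\log^2 n)$, which exceeds $f^{1-1/k}n^{1+1/k}$ once $f \gtrsim n/\log^{2k/(k+1)} n$. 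So for $f$ roughly between $n/\log n$ and $n$ your write-up proves nothing, and you need a genuinely two-sided moment estimate for the binomial $|V(H')|$ (the paper's variance comparison, or a dyadic tail decomposition) rather than the trivial edge count. This is a narrow and fixable corner case, but as written the claimed escape hatch is false.
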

\begin{proof}
Recall that $H'$ is an induced subgraph of $H$ obtained by including every vertex independently with probability $1/(2f)$, and hence $\E[|V(H')|] = n/(2f)$.  For each edge $(u, v) \in E(H')$, we say that
$(u, v)$ is \emph{bad} if $\dist_{H'_e}(u, v) \le (2k-1) \cdot w(u, v)$, and otherwise (if $\dist_{H'_e}(u, v) > (2k-1) \cdot w(u, v)$) we say that $(u, v)$ is \emph{good}.
Let $H'' \subseteq H'$ be obtained by deleting all bad edges.
We will now bound its expected number of edges $\E\left[|E(H'')|\right]$, conditioned on the high probability event from Lemma \ref{lem:accuracy} holding, in two different ways:
\begin{itemize}
    \item For any cycle $C$ in $H'$ with at most $2k$ edges, notice that the edge $(u, v) \in C$ considered last by the algorithm is bad, since there is a $u \leadsto v$ path around the cycle consisting of at most $2k-1$ edges, each of weight at most $w(u, v)$.
    Thus $(u, v)$ is removed in $H''$.
    It follows that $H''$ has \emph{no} cycles with at most $2k$ edges; the folklore Moore Bounds then imply that
    $$\left|E(H'')\right| = O\left(\left|V(H'')\right|^{1+1/k}\right).$$
    Since each of the $n$ nodes in $H$ are included in $H''$ independently with probability $1/(2f)$, we have
    \begin{equation} \label{eq:basic-upper}
    \E\left[\left| E(H'') \right|\right] = O\left(\E\left[ \left|V(H'')\right|^{1+1/k}\right]\right) = O\left( \E\left[ \left|V(H'')\right|\right]^{1+1/k} \right) = O\left(\left(\frac{n}{f}\right)^{1+1/k}\right).
\end{equation}
    In particular, the reason we can pull the exponent outside the expectation in the second step is due to the following computation:
    \begin{align*}
        \left| \E\left[\left|V(H'')\right|^{1+1/k}\right] - \E\left[\left|V(H'')\right|\right]^{1+1/k}\right| &\le \left|\E\left[\left|V(H'')\right|^{2}\right] - \E\left[\left|V(H'')\right|\right]^{2}\right|\\
        &= \mathop{Var}\left[\left|V(H'')\right|\right] = O(n/f)
    \end{align*}
    and hence the difference between these two terms may be hidden in the $O$.

    \item For an edge $e=(u, v) \in E(H)$, the probability that $e$ survives in $H''$ may be decomposed as the probability that it survives in $H'$, times the probability that it survives in $H''$ given that it survives in $H'$.
    This gives:
    \begin{align*}
        \Pr\left[(u, v) \in E(H'') \right] &= \Pr\left[u, v \in V(H')\right] \cdot \Pr\left[(u, v) \text{ is good} \mid u, v \in E(H')\right]\\
        &= \Theta(1/f^2) \cdot P_e.
    \end{align*}
    Since $(u, v)$ was added to $H$, we have $\widehat{P_e} \ge 1/4$, and so (since we condition on Lemma \ref{lem:accuracy}) we have $P_e \ge 1/8$.
    Hence the probability of $(u, v) \in E(H'')$ is $\Theta(1/f^2)$.
    By linearity of expectations, we then have
    \begin{equation} \label{eq:basic-lower}
    |E(H'')| = \Theta\left( \frac{|E(H)|}{f^2} \right).
    \end{equation}
\end{itemize}

Combining \eqref{eq:basic-upper} and \eqref{eq:basic-lower}, we have
\[\Omega\left(\frac{|E(H)|}{f^2}\right) = \E\left[\left|E(H'')\right|\right] = O\left(\left(\frac{n}{f}\right)^{1+1/k}\right)\]
and so, comparing the left- and right-hand sides and rearranging, we get
\begin{align*}
|E(H)| = O\left(n^{1+1/k} f^{1-1/k}\right). \tag*{\qedhere}
\end{align*}
\end{proof}

\begin{lemma} \label{lem:basic-time}
Algorithm~\ref{ALG:basic} runs in polynomial time.
\end{lemma}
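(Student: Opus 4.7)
The plan is to bound the work done per iteration of the outer for loop and multiply by the number $m \le \binom{n}{2}$ of edges. Since the body of the loop consists of a few self-contained steps, this reduces to checking that each is individually polynomial. (The initial sort of $E$ by weight costs $O(m \log m) = O(n^2 \log n)$ and will be dominated by the cost of the loop below.)

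First I would analyze the sampling step. Each of the $\alpha = \Theta(\log n)$ induced subgraphs $\widehat{H^i_e}$ is produced by flipping an independent biased coin for each of the $n$ vertices and then extracting the induced subgraph on the surviving set from the current spanner $H$. The coin flips cost $O(n)$, and the extraction costs at most $O(|E(H)|) = O(n^2)$ using the trivial bound; I would note explicitly that we do not need the (high-probability) size bound from Lemma~\ref{lem:basic-size} in order to establish worst-case polynomiality.

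Next, for each of the $\alpha$ sampled subgraphs, computing $d_{\widehat{H^i_e}}(u,v)$ (or equivalently deciding whether it exceeds $(2k-1) \cdot w(u,v)$) can be done by Dijkstra's algorithm in $O(n^2 \log n)$ time, since the subgraph has at most $n$ vertices and $O(n^2)$ edges. Aggregating the $\alpha$ Boolean outcomes into $\widehat{P_e}$ and comparing against $1/4$ takes $O(\log n)$ additional time.

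Putting it together, each iteration of the outer loop costs $O(\log n \cdot n^2 \log n) = O(n^2 \log^2 n)$, so the total runtime is $O(n^4 \log^2 n) = \poly(n)$. There is essentially no obstacle here: the lemma asks only for polynomial time, with no particular bound, so we can afford the crudest estimates at every step. The more interesting algorithmic work — reusing a fixed small collection of sampled subgraphs across the entire execution, and maintaining distance information incrementally as edges are added to $H$ — is deferred to Section~\ref{sec:randomized}, where the runtime is sharpened to match Theorem~\ref{thm:randomized-main}.
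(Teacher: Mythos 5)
Your proof is correct and follows essentially the same decomposition as the paper's: sort the edges, then for each edge bound the cost of sampling $\Theta(\log n)$ subgraphs and running a shortest-path computation in each. The only difference is that you use the trivial $O(n^2)$ bound on $|E(H)|$ where the paper invokes the (high-probability) size bound of Lemma~\ref{lem:basic-size} to get a sharper estimate; since the lemma claims only polynomial time, your cruder, unconditional bound suffices.
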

\begin{proof}
We first need to sort the edges by weight, which takes at most $O(m \log n)$ time.
Then for each edge $e \in E$, we must sample $\Theta(\log n)$ independent subgraphs and then run a single-source shortest path computation in each.
Sampling such a subgraph takes $O(n + |E(H)|) \leq O(f^{1-1/k} n^{1+1/k})$ time (using the bound on $|E(H)|$ from Lemma~\ref{lem:basic-size}).
The running time of the shortest path computation on the subgraph is at most $O(|E(H)| \log n)$,\footnote{This estimate is conservative; except in a small range of parameters the subgraph is much smaller than $H$ and thus the running time of the shortest path computation is dominated by the time needed to sample the subgraph in the first place.}
which is at most $O(f^{1-1/k} n^{1+1/k} \log n)$.
Since we repeat $O(\log n)$ times per round, and we have $m$ total rounds in the algorithm, the total runtime is $O(m f^{1-1/k} n^{1+1/k} \log^2 n)$.
\end{proof}

\section{An Even Faster Randomized Algorithm \label{sec:randomized}}

Algorithm~\ref{ALG:randomized} is a bit more complicated than Algorithm~\ref{ALG:basic}, but it is significantly faster: in terms of runtime, it essentially turns the $O(m)$ factor in Lemma~\ref{lem:basic-time} into an $O(k^2 n)$ factor.
In Algorithm~\ref{ALG:basic}, up to a logarithmic factor the runtime per sampled subgraph is dominated by the time required to compute the subgraph rather than the time to measure distances on the subgraph, since sampling the subgraph requires time linear in $|E(H)|$ while distances are computed in the subgraph itself, which has fewer edges than $H$.  
This leads to essentially $O(|E(H)| \cdot m \log n)$ total runtime, since we sample $O(m \log n)$ subgraphs in total: $O(\log n)$ subgraphs in each round, and $m$ total rounds.
In our new algorithm, we improve this by (1) pre-sampling $\Theta(f^3 \log n)$ subgraphs and using them in each round, and (2) measuring relevant distances on these subgraphs using a certain incremental dynamic algorithm, rather than recomputing from scratch.
See Section \ref{sec:randomized-overview} for a more detailed overview.
We now state the algorithm formally.

\begin{definition}
Let $G = (V, E)$ be an undirected graph, and let $k \geq 1$ be an integer.  The \emph{layered graph} $G^{2k}$ is the directed graph with vertex set $V \times [2k]$ and edges
\begin{align*}
((u,i), (v, i+1)) & \text{ for all } i \in [2k-1], (u,v) \in E, \text{ and}\\
((u,i), (u, i+1)) & \text{ for all } i \in [2k-1], u \in V.
\end{align*}
\end{definition}

Notice that the \emph{unweighted} distance between $u$ and $v$ in $G$ is at most $2k-1$ if and only if there is a path from $(u,1)$ to $(v, 2k)$ in $G^{2k}$.

\begin{theorem}[\cite{Italiano86}] \label{thm:dynamic-connectivity}
There is a data structure 
which takes $O(n^2)$ time to initialize on an empty $n$-node graph, and which can then support directed edge insertions in $O(n)$ time (amortized) and reachability queries (answering ``is there currently a $u \leadsto v$ path in the graph?) in $O(1)$ time.
\end{theorem}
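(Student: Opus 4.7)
The plan is to implement the classical Italiano reachability structure: for each vertex $u \in V$, maintain a boolean indicator array $R_u \in \{0,1\}^V$ with $R_u[v] = 1$ iff $v$ is currently reachable from $u$ in the stored graph. Initialization sets $R_u[u] \gets 1$ and $R_u[v] \gets 0$ for $v \ne u$ across all $n$ sources, taking $\Theta(n^2)$ time and space. A reachability query $(u,v)$ is answered in $O(1)$ time by reading the single bit $R_u[v]$. I would also maintain outgoing adjacency lists so that edge insertions can append to them in $O(1)$.

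For the insertion of a directed edge $(x,y)$, I would first update the adjacency list of $x$, then iterate over all $u \in V$ and test in $O(1)$ whether $R_u[x] = 1$ and $R_u[y] = 0$. When both hold, I would run a DFS starting at $y$ in the updated graph that refuses to enter any vertex $w$ with $R_u[w] = 1$, setting $R_u[w] \gets 1$ for every vertex the DFS visits. For sources $u$ where either $R_u[x] = 0$ (so the insertion cannot help $u$ reach anything new, since the new edge has its tail at $x$) or $R_u[y] = 1$ already, no work is done beyond the $O(1)$ check.

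For the amortized analysis, the key observation is that each entry $R_u[w]$ flips from $0$ to $1$ at most once over the life of the data structure. When vertex $w$ is added to $R_u$, the DFS spends $O(\deg^+(w))$ time scanning its out-edges, so the total maintenance cost for a single $R_u$ across all insertions is $O(n + m)$. Summing over all $n$ sources gives $O(n^2 + nm)$ total DFS work. Additionally, each of the $m$ insertions spends $O(n)$ time scanning the sources to decide which require updating, contributing $O(nm)$. Dividing by $m$ yields the $O(n)$ amortized bound per insertion, with the initial $O(n^2)$ scan absorbed into the initialization cost.

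The main obstacle is arguing correctness of the restricted DFS, since a priori a new $u \leadsto w$ path might weave in and out of the old $R_u$, which would make a DFS from $y$ that refuses to enter old $R_u$ vertices seemingly miss reachable targets. The way to resolve this is to note that any new $u \leadsto w$ path must traverse the freshly inserted edge $(x,y)$ at least once; picking the \emph{last} such traversal yields a suffix that begins at $y$ and whose internal vertices all lie outside the old $R_u$, because if any intermediate vertex $z$ on the suffix were already in the old $R_u$, then the sub-path from $z$ to $w$ would have witnessed reachability of $w$ from $u$ before the insertion, contradicting novelty. Thus every newly reachable $w$ is discovered by the restricted DFS from $y$, and the structure correctly identifies exactly the set of new reachabilities.
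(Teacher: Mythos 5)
The paper does not prove this theorem at all---it is cited as a known result of Italiano~(1986)---and your argument is a correct reconstruction of exactly that classical construction: per-source reachability bit-vectors $R_u$, a restricted DFS from the head $y$ of each inserted edge for every source $u$ with $R_u[x]=1$ and $R_u[y]=0$, and the amortization based on each bit flipping from $0$ to $1$ at most once. Your last-traversal argument for why the restricted DFS cannot miss a newly reachable vertex is the right (and necessary) correctness step, so there is nothing to fix.
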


\begin{algorithm}
\caption{Faster $f$-VFT $(2k-1)$-Spanner Algorithm}
\label{ALG:randomized}
\begin{algorithmic}[1]
\REQUIRE Graph $G = (V, E)$, edge weights $w : E \rightarrow \mathbb{R}^+$, integers $k \geq 1$ and $f \geq 1$.\\

\COMMENT {\textbf{preprocessing phase}}
\STATE $H\leftarrow (V, \emptyset)$

\FOR{$i=1$ \TO $\alpha = c f^3 \log n$}
    \STATE Create $V_i$ by including every vertex of $V$ independently with probability $1/(2f)$
    \STATE Let $H_i = (V_i, \emptyset)$
    \STATE Create the layered graph $H_i^{2k}$, and initialize the data structure of Theorem~\ref{thm:dynamic-connectivity} for $H_i^{2k}$
\ENDFOR
\STATE \textbf{for all} $e = (u,v) \in E$ \textbf{do} Let $L_e = \{i \ \mid \ u,v \in V_i\}$\\~\\

\COMMENT {\textbf{main greedy algorithm}}
\FORALL{$e = (u,v) \in E$ in nondecreasing weight order} 
    \FORALL{$H_i \in L_e$}
        \STATE Query whether there is a $(u,1) \leadsto (v, 2k)$ path in $H_i^{2k}$
    \ENDFOR
    \STATE Let $P_e$ be the fraction of subgraphs $H_i \in L_e$ where the query returns NO
    \STATE Let $0 < \tau < 1$ be an absolute constant that we choose in the analysis
    \IF{$P_e \geq \tau$} 
        \STATE Add $e$ to $H$ 
        \FORALL{$i \in L_e$}
            \STATE Add $e$ to $H_i$  
            \STATE Update the connectivity data structure for $H_i^{2k}$ by inserting the edges $((u,j), (v, j+1))$ and $((v,j), (u, j+1))$ in $H_i^{2k}$ for all $j \in [2k-1]$  
        \ENDFOR
    \ENDIF 
\ENDFOR
\RETURN H
\end{algorithmic}
\end{algorithm}

Like for Algorithm~\ref{ALG:basic}, we let $H$ be the final spanner, and now we let $H'$ be a uniform random subgraph \emph{among those selected in the preprocessing phase}.
Let $H_e$ ($H'_e$) denote the subgraph of $H$ ($H'$) containing only the edges considered strictly before $e$ in the algorithm.
We note that we do not have separate analogous definitions of $\widehat{P_e}, P_e$ this time: the relevant probability $P_e$ is computed \emph{exactly} by the algorithm.
We start our analysis with the following technical lemma:

\begin{lemma} \label{lem:FS-random}
With high probability over the choice of random subgraphs in the preprocessing phase, for every $e=(u, v) \in E$ and $F \subseteq V \setminus \{u, v\}$ with $|F| \le f$, we have:
\begin{enumerate}
     \item $|L_e| = O(f \log n)$
     \item $|\{i \in L_e \ \mid \ F \cap V_i = \emptyset\}| = \Omega(f \log n)$
\end{enumerate}
\end{lemma}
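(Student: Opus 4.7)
The plan is a standard two-step application of Chernoff concentration followed by a union bound. Fix an edge $e = (u,v) \in E$ and a fault set $F \subseteq V \setminus \{u,v\}$ with $|F| \leq f$. For each $i \in [\alpha]$ I would define indicators
\[
X_i := \mathbf{1}[u, v \in V_i] \quad \text{and} \quad Y_i := \mathbf{1}[u, v \in V_i \text{ and } V_i \cap F = \emptyset].
\]
Since vertices are included in $V_i$ independently with probability $1/(2f)$, we have $\E[X_i] = 1/(4f^2)$, and, using the elementary bound $(1 - 1/(2f))^{f} \geq 1/2$, also $\E[Y_i] \geq 1/(8f^2)$. With $\alpha = c f^3 \log n$ subgraphs, both $\E[|L_e|] = \sum_i \E[X_i]$ and $\E\bigl[|\{i \in L_e : V_i \cap F = \emptyset\}|\bigr] = \sum_i \E[Y_i]$ are therefore $\Theta(c f \log n)$, where the hidden constant can be made as large as we please by choosing $c$ large.

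For part 1, I would apply the multiplicative Chernoff upper tail from Theorem~\ref{thm:Chernoff} to $|L_e| = \sum_i X_i$, giving $\Pr\bigl[|L_e| > 2\E[|L_e|]\bigr] \leq \exp\bigl(-\Omega(\E[|L_e|])\bigr) = n^{-\Omega(c f)}$. Since this event depends only on $e$, a union bound over the $m \leq n^2$ edges suffices; taking $c$ a sufficiently large absolute constant makes the total failure probability at most $1/(2n)$, which establishes the $O(f \log n)$ upper bound on every $|L_e|$.

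Part 2 follows from the Chernoff lower tail applied to $\sum_i Y_i$: for any fixed pair $(e,F)$ we get $\Pr\bigl[\sum_i Y_i < \tfrac12 \E[\sum_i Y_i]\bigr] \leq \exp\bigl(-\Omega(\E[\sum_i Y_i])\bigr) = n^{-\Omega(c f)}$. The main obstacle, and the reason why Algorithm~\ref{ALG:randomized} pre-samples $\Theta(f^3 \log n)$ subgraphs rather than the $\Theta(f \log n)$ one might naively guess, is the union bound over fault sets: there are $\binom{n}{f} \leq n^f$ choices of $F$, so the per-pair failure probability must be absorbed against a factor of up to $n^{f+2}$. The extra $f^2$ in the number of samples is precisely what inflates the Chernoff exponent by a factor of $f$ (because $\E[\sum_i Y_i]$ grows proportionally), yielding a per-pair bound of $n^{-\Omega(c f)}$. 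Choosing $c$ to be a sufficiently large absolute constant (a quick calculation shows something like $c \geq 256$ suffices) then guarantees that $n^{-\Omega(cf)} \cdot n^{f+2} \leq 1/(2n)$, so the union bounds for both parts combine to give total failure probability at most $1/n$, as required.
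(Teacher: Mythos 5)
Your proposal is correct and follows essentially the same route as the paper's proof: compute the expectations $\Theta(f\log n)$ for both quantities (using $u,v\notin F$ so that the two events factor), apply Chernoff tails per edge and per $(e,F)$ pair, and union bound over $n^2$ edges and $n^{f}$ fault sets, with the $\Theta(f^3\log n)$ samples supplying exactly the $n^{-\Omega(cf)}$ per-pair failure probability needed to absorb the $n^{f+2}$-fold union bound. The only cosmetic difference is that the paper uses the $t>2e\E[X]$ form of the Chernoff bound for the upper tail on $|L_e|$ rather than the multiplicative form.
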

\begin{proof}
For the first part, by linearity of expectations the expected number of sets $V_i$ that contain both $u$ and $v$ is exactly $\alpha / (2f)^2 = (c/4) f \log n$.
Applying Chernoff bounds (Theorem \ref{thm:Chernoff}), we have (for sufficiently large $c$):
\begin{align*}
    \Pr\left[ |L_e| >  ce \cdot f \log n \right] &= \Pr\left[ |L_e| > 4e \E\left[\left| L_e \right| \right] \right] < 2^{-ce \cdot f \log n} < 1/(2 n^3)
\end{align*}
A union bound over the $m \le n^2$ edges in the graph implies that $|L_e| = O(f \log n)$ for all $e$, simultaneously, with probability at least $1 - 1/(2n)$. 

For the second part of the lemma, for any $e=(u, v) \in E$ and $F \subseteq V \setminus \{u, v\}$ with $|F| \leq f$, and for any $i \in [\alpha]$, we have
\begin{align*}
\Pr[i \in L_e \text{ and } F \cap V_i = \emptyset] &= \Pr[u, v \in V_i] \cdot \Pr[ F \cap V_i \ne \emptyset]
\end{align*}
since the two probabilities on the right-hand side consider independent events (since $u, v \notin F$).
Thus we may continue
\begin{align*}
\Pr[i \in L_e \text{ and } F \cap V_i = \emptyset] &= \left(\frac{1}{4f^2}\right) \left(1-\frac{1}{2f}\right)^{|F|}\\
&\geq \frac{1}{4f^2} \left(1-\frac{1}{2f}\right)^f\\
&\geq \frac{1}{8f^2}.
\end{align*}
By linearity of expectations,
$$\E\left[\left|\{i \in L_e \ \mid \ F \cap V_i  = \emptyset\}\right|\right] \geq \frac{\alpha}{8f^2} =\left(\frac{c}{8}\right) f \log n.$$
Again by Chernoff bounds, we have
\begin{align*}
    \Pr\left[ \left|\{i \in L_e \ \mid \ F \cap V_i  = \emptyset\}\right| < \left(\frac{1}{2}\right) \left(\frac{c}{8} \right) f \log n  \right] &\le \exp\left(\left(-\frac{1}{8} \right) \left(\frac{c}{8} \right) f \log n  \right)\\
    &\le \frac{1}{2n^{4f}} \tag*{assuming sufficiently large $c$.}
\end{align*}
Taking a union bound over all $\le n^f$ possible choices of $F$ and all $m \le n^2$ edges, we have $$\left|\{i \in L_e \ \mid \ F \cap V_i  = \emptyset\}\right| = \Omega(f \log n)$$
for all choices of $e, F$, simultaneously, with probability $\ge 1 - 1/(2n)$.
Hence, by an intersection bound, the two parts of the lemma hold jointly with high probability.
\end{proof}

We are now ready to prove the properties of Algorithm~\ref{ALG:randomized}:
\begin{lemma} \label{lem:random-correct}
With high probability, Algorithm~\ref{ALG:randomized} returns an $f$-VFT $(2k-1)$-spanner. 
\end{lemma}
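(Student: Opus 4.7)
The plan is to argue by contrapositive via Lemma \ref{lem:structure}: it suffices to show that for every spanning edge $e=(u,v) \in E$ and every fault set $F \subseteq V\setminus\{u,v\}$ with $|F|\le f$, if $d_{H\setminus F}(u,v) > (2k-1)\,w(u,v)$ then $e$ is inserted into $H$ by the algorithm. Because $H_e \subseteq H$, it will be enough to establish the stronger implication $d_{H_e\setminus F}(u,v) > (2k-1)\,w(u,v) \implies P_e \ge \tau$. Throughout, I would condition on the high-probability event of Lemma \ref{lem:FS-random}.

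The first key step is to locate a large collection of indices $i$ on which the algorithm's reachability query returns NO. For any $i \in L_e$ with $F \cap V_i = \emptyset$, the current state $H_{i,e}$ of $H_i$ is, by construction, precisely the edge-subgraph of $H_e$ supported on $V_i$. Since $V_i$ is disjoint from $F$, we have $H_{i,e} \subseteq H_e \setminus F$, and thus $d_{H_{i,e}}(u,v) \ge d_{H_e\setminus F}(u,v) > (2k-1)\,w(u,v)$.

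Next, I would perform the weighted-to-unweighted conversion that justifies using layered-graph reachability as the test. Since edges are processed in nondecreasing weight order, every edge currently in $H_{i,e}$ has weight at most $w(u,v)$; hence any $u$-to-$v$ path of at most $2k-1$ hops in $H_{i,e}$ would have total weight at most $(2k-1)\,w(u,v)$, contradicting the previous display. So the unweighted hop-distance from $u$ to $v$ in $H_{i,e}$ is strictly greater than $2k-1$, which by definition of the layered graph $H_{i,e}^{2k}$ means no $(u,1)\leadsto(v,2k)$ path exists and the algorithm records NO at index $i$. Combining this with the two parts of Lemma \ref{lem:FS-random}, which simultaneously give $|L_e| = O(f\log n)$ and $|\{i\in L_e : F \cap V_i = \emptyset\}| = \Omega(f\log n)$, shows that a universal constant fraction of queries in $L_e$ return NO, so $P_e$ is at least some absolute constant. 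Fixing $\tau$ in Algorithm \ref{ALG:randomized} to any value at most this constant then forces $e$ into $H$, completing the contrapositive.

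I expect the only subtle point, and thus the main obstacle, to be the weighted-to-unweighted reduction in the third paragraph: the algorithm is really testing a relaxation of the condition we must certify, and this relaxation is sound only because the slack-greedy processing order keeps all edges in $H_{i,e}$ no heavier than $w(u,v)$, while the layered graph provides exactly $2k$ levels. The looseness introduced by this relaxation is what allows the algorithm to occasionally admit edges it did not strictly need, and will have to be absorbed later in the size analysis rather than in the correctness proof presented here.
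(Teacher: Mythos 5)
Your proposal is correct and follows essentially the same route as the paper's proof: reduce to spanning edges via Lemma~\ref{lem:structure}, observe that the nondecreasing weight order lets one pass from $d_{H_e\setminus F}(u,v) > (2k-1)w(e)$ to unweighted hop-distance $> 2k-1$ in every subgraph $H_i$ with $i \in L_e$ and $F \cap V_i = \emptyset$, and then invoke both parts of Lemma~\ref{lem:FS-random} to conclude $P_e$ exceeds an absolute constant against which $\tau$ is set. Your write-up is somewhat more explicit about the layered-graph soundness step, but there is no substantive difference.
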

\begin{proof}
Let $e = (u, v)$.  As with our proof of correctness of Algorithm~\ref{ALG:basic} (Lemma~\ref{lem:basic-correct}), by Lemma~\ref{lem:structure} we just need to show that when the algorithm considers $e$, if there is a fault set $F$ for which
$$d_{H_e \setminus F}(u,v) > (2k-1) \cdot w(e),$$
then the algorithm adds $e$ to $H$.
Notice that this implies
$$d^*_{H_e \setminus F}(u, v) > 2k-1,$$
since by construction the weight of every edge in $H_e$ is no larger than $w(e)$ (recall from Section~\ref{sec:notation} that $d^*$ denotes the unweighted distance).
By Lemma~\ref{lem:FS-random}, with high probability a constant fraction of the $i \in L_e$ have $F \cap V_i = \emptyset$ and thus we have $d^*(u, v) > 2k-1$ in the corresponding subgraphs.
Thus $P_e$ is at least an absolute constant; by setting $\tau$ less than this constant, we will add $e$ to $H$.
\end{proof}

\begin{lemma} \label{lem:random-size}
With high probability, $|E(H)| \leq O\left(f^{1-1/k} n^{1+1/k}\right)$.
\end{lemma}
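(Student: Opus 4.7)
The plan is to mirror the two-sided expectation argument from Lemma~\ref{lem:basic-size}, adapted to the fact that Algorithm~\ref{ALG:randomized} reuses the $\alpha = c f^3 \log n$ pre-sampled vertex sets rather than drawing fresh ones per round. As in the analysis of $H'$ in the basic algorithm, I let $H'$ be a uniformly random one of the $\alpha$ subgraphs induced on the final spanner; call an edge $(u,v) \in E(H')$ \emph{bad} if $d^*_{H'_e}(u, v) \le 2k-1$, and otherwise \emph{good}. Let $H''$ be $H'$ with all bad edges removed. Throughout I condition on the high-probability event of Lemma~\ref{lem:FS-random} together with the (also high-probability) event that $|V_i| = O(n/f)$ for every $i \in [\alpha]$, which follows from a Chernoff bound plus a union bound over the $\alpha = \poly(n)$ samples.

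For the upper bound on $\E[|E(H'')|]$, I would argue as in Lemma~\ref{lem:basic-size} that for each individual $H_i$ the surviving subgraph has no cycle of length $\le 2k$: the unweighted version of the goodness test exactly captures the fact that the last edge on any such cycle must be bad (since unit-hop distances in $H_i$ on edges considered before $e$ correspond precisely to paths in the layered graph $H_i^{2k}$, and any edge on a short cycle has a path of $\le 2k-1$ earlier edges bypassing it). The Moore bound then gives $|E(H''_i)| = O(|V_i|^{1+1/k}) = O((n/f)^{1+1/k})$ for every $i$, and averaging over $i$ yields $\E[|E(H'')|] = O((n/f)^{1+1/k})$.

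For the lower bound on $\E[|E(H'')|]$, I would fix any $e = (u,v) \in E(H)$ and decompose
\[
\Pr[e \in E(H'')] = \Pr[e \in E(H')] \cdot \Pr[e \text{ good} \mid e \in E(H')] = \frac{|L_e|}{\alpha} \cdot P_e,
\]
where the second factor is $P_e$ by construction, since conditional on $H' = H_j$ for a uniformly random $j \in L_e$, the probability that $e$ is good is exactly the fraction of $i \in L_e$ whose layered-graph query returns NO. Since $e$ was added to $H$, we have $P_e \ge \tau = \Omega(1)$, and Lemma~\ref{lem:FS-random} (applied with $F = \emptyset$ to get the lower bound on $|L_e|$) gives $|L_e| = \Omega(f \log n)$. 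Together with $\alpha = \Theta(f^3 \log n)$, this gives $\Pr[e \in E(H'')] = \Omega(1/f^2)$, so by linearity of expectation $\E[|E(H'')|] = \Omega(|E(H)|/f^2)$.

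Chaining the two bounds yields $|E(H)|/f^2 = O((n/f)^{1+1/k})$, i.e., $|E(H)| = O(f^{1-1/k} n^{1+1/k})$, as desired. The main subtlety I anticipate is the double layer of randomness: the $V_i$'s are random from preprocessing, while $H'$ is an analytical uniform choice among them. I would handle this by conditioning on the preprocessing-side good event (Lemma~\ref{lem:FS-random} together with the uniform $|V_i|$ concentration), after which $H$ is deterministic and both the upper bound (which uses only the pre-sampled $|V_i|$'s) and the lower bound (which uses only $|L_e|$ and $P_e$) become clean averages over $i \in [\alpha]$.
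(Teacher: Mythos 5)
Your proof is correct and follows essentially the same route as the paper's: the same good/bad edge dichotomy on a uniformly random pre-sampled subgraph, the Moore bound for the upper estimate on $\E[|E(H'')|]$, and Lemma~\ref{lem:FS-random} plus the threshold $\tau$ for the matching $\Omega(|E(H)|/f^2)$ lower estimate. The only (cosmetic) differences are that the paper bounds $\E[|V(H'')|]$ via the total vertex count over all $\alpha$ subgraphs rather than a per-subgraph uniform bound on $|V_i|$, and counts good edges in aggregate rather than decomposing $\Pr[e \in E(H'')]$ edge by edge.
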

\begin{proof}
This loosely follows the proof of Lemma \ref{lem:basic-size}.
For each edge $(u, v) \in E(H')$, let us say:
\begin{align*}
(u, v) \text{ is \emph{bad}} & \text{ if } \dist^*_{H'_e}(u, v) \le 2k-1\\
(u, v) \text{ is \emph{good}} & \text{ if } \dist^*_{H'_e}(u, v) > 2k-1.
\end{align*}
Let $H'' \subseteq H'$ be obtained by deleting all bad edges.
We now again bound $\E[|E(H'')|]$ in two ways:
\begin{itemize}

    \item By essentially the same argument as in Lemma \ref{lem:basic-size}, $H''$ has no cycles on $\le 2k$ edges, and thus 
    \begin{align*}
        \E\left[ \left| E(H'') \right|\right] &= O\left( \E\left[\left| V(H'') \right|^{1+1/k} \right] \right) = O\left(\E\left[ \left| V(H'') \right| \right]^{1+1/k} \right)
    \end{align*}
    where we may pull the exponent outside the expectation by the same argument as in Lemma \ref{lem:basic-size}.
    When we choose random subgraphs in the preprocessing phase, the \emph{total} number of nodes added to all subgraphs can be viewed as the sum of independent binary random variables, and the expectation is $\Theta(n f^2 \log n)$.
    Thus, by Chernoff bounds, with high probability over the choice of random subgraphs in the preprocessing phase, we do indeed have $\Theta(n f^2 \log n)$ total nodes in our subgraphs.
    Conditioned on this high probability event, since we have exactly $\alpha = \Theta(f^3 \log n)$ subgraphs, the average sampled subgraph $H_i$ has $|V(H_i)| = \Theta(n / f)$.
    We then have $\E[|V(H'')|] = \Theta(n/f)$, and so with high probability
    $$\E\left[\left| E(H'') \right|\right] = O\left( \left( \frac{n}{f} \right)^{1+1/k} \right)$$
    (where the expectation is only over the choice of $H'$ among the subgraphs sampled in the preprocessing phase).

    \item By Lemma \ref{lem:FS-random}, with high probability over the choice of random subgraphs, we have $|L_e| = \Theta(f \log n)$ for all edges $e$.
    In this event, whenever we add an edge $e=(u, v)$ to the spanner we also add the edge to $\Theta(f \log n)$ out of the $\alpha = \Theta(f^3 \log n)$ subgraphs.
    Moreover, by construction, in at least a constant $\tau$ fraction of these subgraphs $H_i$, our $(u, 1) \leadsto (v, 2k)$ path query returns NO.
    It follows that the current unweighted $u \leadsto v$ distance in these subgraphs is $>2k-1$, and hence $e$ is a \emph{good} edge in $\Theta(\tau \cdot f \log n) = \Theta(f \log n)$ subgraphs.
    So the total number of good edges among the $\Theta(f^3 \log n)$ subgraphs is $\Theta(|E(H)| f \log n)$
    We thus have
    $$\E\left[|E(H'')| \right] = \Theta\left(\frac{|E(H)|}{f^2}\right).$$
\end{itemize}

Combining these, we have
$$\Omega\left( \frac{|E(H)|}{f^2} \right) = \E\left[|E(H'')|\right] = O\left( \left( \frac{n}{f} \right)^{1+1/k} \right),$$
and again the lemma follows by comparing the left- and right-hand sides and rearranging.
\end{proof}

\begin{lemma} \label{lem:randomized-running-time}
The expected running time of Algorithm~\ref{ALG:randomized} is at most \[O\left(k^2 f^{1-1/k} n^{2 + 1/k} \log n  + m f^2 \log n\right).\]
\end{lemma}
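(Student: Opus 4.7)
The plan is to bound the expected runtime in two pieces: preprocessing, and the main greedy loop.

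For preprocessing, there are three costs to account for. Sampling the $\alpha = \Theta(f^3 \log n)$ vertex sets $V_i$ is $O(n)$ each, or $O(n f^3 \log n)$ total. Initializing the Italiano data structure of Theorem~\ref{thm:dynamic-connectivity} for each layered graph $H_i^{2k}$ (which has $2k |V_i|$ nodes) costs $O((k |V_i|)^2)$, and since $\mathbb{E}[|V_i|^2] \le (\mathbb{E}[|V_i|])^2 + \mathrm{Var}[|V_i|] = O((n/f)^2)$, the expected total is $O(\alpha \cdot k^2 (n/f)^2) = O(k^2 f n^2 \log n)$, which is absorbed into $O(k^2 f^{1-1/k} n^{2+1/k} \log n)$ since $f \le n$. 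Finally, for each $v \in V$ I would precompute the list of indices $i$ with $v \in V_i$, which has expected size $\alpha / (2f) = O(f^2 \log n)$ per vertex; then $L_e$ for $e=(u,v)$ is obtained by intersecting the two lists in $O(f^2 \log n)$ time, for a total of $O(m f^2 \log n)$ across all edges.

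For the main greedy loop, I would separate the cost of queries from the cost of updates. For each considered edge $e$, we perform $|L_e|$ reachability queries, each $O(1)$ by Theorem~\ref{thm:dynamic-connectivity}. Lemma~\ref{lem:FS-random} gives $|L_e| = O(f \log n)$ with high probability, so the total query cost is $O(m f \log n)$, absorbed into the $O(m f^2 \log n)$ term. For updates, whenever an edge $e$ is added to $H$, we insert $O(k)$ pairs of directed edges into each $H_i^{2k}$ with $i \in L_e$; each insertion costs $O(k|V_i|) = O(kn/f)$ amortized by Theorem~\ref{thm:dynamic-connectivity}. Per spanner edge this is $O(|L_e| \cdot k \cdot k n/f) = O(k^2 n \log n)$, and by Lemma~\ref{lem:random-size} there are only $O(f^{1-1/k} n^{1+1/k})$ spanner edges, giving total update cost $O(k^2 f^{1-1/k} n^{2+1/k} \log n)$.

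The main obstacle is converting the high-probability bounds of Lemmas~\ref{lem:FS-random} and~\ref{lem:random-size} into bounds on the \emph{expected} runtime. This is handled by noting that the complementary bad event occurs with probability at most $O(1/n)$, and in that event we can fall back on crude worst-case bounds $|L_e| \le \alpha = O(f^3 \log n)$, $|V_i| \le n$, and $|E(H)| \le m \le n^2$, whose contribution $O(1/n) \cdot \mathrm{poly}(n, f, k)$ is absorbed into the stated bound. Adding the dominant terms from the two phases yields the claimed $O(k^2 f^{1-1/k} n^{2+1/k} \log n + m f^2 \log n)$.
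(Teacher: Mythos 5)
Your proof is correct and follows essentially the same decomposition as the paper's (preprocessing vs.\ main loop, Italiano's data structure costs, $|L_e| = O(f\log n)$, and the spanner-size bound for the insertion count); you are in fact slightly more careful than the paper about converting the high-probability bounds into expected-time bounds. The only small omission is that absorbing the $O(nf^3\log n)$ sampling cost (and the analogous initialization terms) into $O(mf^2\log n)$ requires the paper's explicit observation that one may assume $m \geq f^{1-1/k}n^{1+1/k} \geq fn$, since otherwise the input graph itself can be returned.
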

\begin{proof}
We first analyze the preprocessing phase.  Note that without loss of generality, $m \geq f^{1-1/k} n^{1+1/k} \geq fn$ or else we are already finished (we can simply return the input graph).

We can create the $V_i$ sets in time $O(n f^3 \log n) = O(mf^2 \log n)$ by flipping $\alpha = O(f^3 \log n)$ weighted coins for each vertex.  Once we do this, we may assume that every vertex $v$ has a \emph{sorted} list $L_v$ of the values of $i$ for which $v \in V_i$.  In expectation, each of the layered graphs $H_i^{2k}$ has $O(kn/f)$ vertices, so we can then create the layered graphs in expected time $O((kn/f) f^3 \log n) = O(k n f^2 \log n) = O(k f^{1-1/k} n^{2+1/k} \log n)$ (by linearity of expectations).  Note that this includes the initial edges of the form $((u,i), (u,i+1))$ in each of the layered graphs.  Initializing the data structure of Theorem~\ref{thm:dynamic-connectivity} for each of the layered graphs takes time $O((kn/f)^2)$ in expectation (since the number of nodes is binomial random variable and $\E[X^2] \leq O(\E[X]^2)$ for any binomial random variable $X$), and thus the total expected time to initialize these data structures is $O((kn/f)^2 f^3 \log n) = O(k^2 f n^2 \log n) \leq O(k^2 f^{1-1/k} n^{2+1/k} \log n)$.  We also need to insert all of the initial edges of the form $((u,i), (u,i+1))$ into these data structures, which takes expected time $O(f^3 \log n \cdot (kn/f) \cdot (kn/f)) = O(k^2 f n^2 \log n) \leq O(k^2 f^{1-1/k} n^{2+1/k} \log n)$ (again using the square of a binomial random variable).  Hence we can construct and initialize the vertex sets and the needed data structures in expected time at most
\begin{align*}
    & O(mf^2 \log n) + O(k f^{1-1/k} n^{2+1/k} \log n) + O(k^2 f^{1-1/k} n^{2+1/k} \log n) + O(k^2 f^{1-1/k} n^{2+1/k} \log n) \\
    =\ & O(mf^2 \log n) + O(k^2 f^{1-1/k} n^{2+1/k} \log n). 
\end{align*}

To create the $L_e$ sets we need to be a little more careful, since doing it naively (looping through each edge and each $i \in [\alpha]$ and checking if both endpoints are in $V_i$) would take $O(m f^3 \log n)$ time.  But we can speed this up: since every vertex $v$ has $L_v$ in a sorted list, for each edge $e = (u,v)$ we can just do a single pass through $L_u$ and $L_v$ to compute $L_e = L_u \cap L_v$.  Thus this can be done in $O(|L_u| + |L_v|)$ time, and since $|L_v| \leq O(f^2 \log n)$ with high probability for every $v \in V$, this takes time $O(f^2 \log n)$ per edge and thus $O(m f^2 \log n )$ total.  

Putting all of this together, we get that the preprocessing takes time $O(k^2 f^{1-1/k} n^{2+1/k} \log n + m f^2 \log n)$.

We now analyze the main loop.  For every $e = (u,v) \in E$, the algorithm performs a connectivity query in $O(f \log n)$ of the layered subgraphs.  By Theorem~\ref{thm:dynamic-connectivity}, this takes $O(m f \log n)$ total time.  When we decide to add an edge $e$ to the spanner (which happens at most $O(f^{1-1/k} n^{1+1/k})$ times by Lemma~\ref{lem:random-size}), we have to do $O(k)$ insertions into each of the $O(f \log n)$ layered graphs in $L_e$.  The amortized cost of each insertion is $O(kn/f)$ by Theorem~\ref{thm:dynamic-connectivity} (since the number of nodes in each layered graph is $O(kn/f)$ by Lemma~\ref{lem:FS-random}), and hence the total time of all insertions is $O(f^{1-1/k} n^{1+1/k} (kn/f) kf \log n) = O(k^2 f^{1-1/k} n^{2+1/k} \log n)$.  This is asymptotically larger than $m \log n$ since $m < n^2$, and hence the running time of the main loop is $O(k^2 f^{1-1/k} n^{2+1/k} \log n)$.
\end{proof}

\section{Deterministic Algorithm \label{sec:deterministic}}

We now design a deterministic algorithm by derandomizing Algorithm~\ref{ALG:randomized}.  Recall that Algorithm~\ref{ALG:randomized} uses randomization in the preprocessing phase to create $\Theta(f^3 \log n)$ vertex sets.
We will derandomize this by \emph{deterministically} creating sets with the same properties by using appropriately chosen hash functions.  This results in a deterministic algorithm whose running time depends on the size of the hash family that we use.  If we use universal or pairwise-independent hash functions, then we end up paying additional factors of $n$ in the running time.  We can improve this by using \emph{almost}-universal hash functions, since our analysis is robust to changes in constants.  Standard constructions then give the same dependence on $n$ as in Algorithm~\ref{ALG:randomized}, but polynomially worse dependence on $f$.  But in the most important regime where $f$ is polynomial in $n$, we can use ideas from \emph{message authentication codes} in cryptography to design a hash family which is significantly more efficient, allowing us to get running time that is essentially \emph{identical} to the randomized algorithm!

\subsection{Set System}

Intuitively, we want sets which ``act like'' the $\Theta(f^3 \log n)$ random sets of Algorithm~\ref{ALG:randomized}.  So they should each have size about $n/f$, there shouldn't be too many sets in each $L_e$ (each edge shouldn't be in too many of the subsets), and for every fault set $F$ a constant fraction of the sets in $L_e$ should not intersect $F$.  We will proceed somewhat similarly to the approach of~\cite{Par19}, who needed a set system where for all sets $A$ of some size $a$ and all sets $B$ of some size $b$, there was at least one set in the system which contained all of $A$ and none of $B$.  This is similar to what we want, but differs in two important respects: we are only concerned with the special case of $a = 2$, but we want not just that there \emph{exists} a set in the system which contains all of $A$ and none of $B$, but that a \emph{constant fraction} of the sets in the system which contain $A$ do not contain any of $B$.

Trying to apply~\cite{Par19} as a black box, or even using the construction from~\cite{Par19}, gives highly suboptimal bounds: the number of sets we would need would be exponential in $f$.\footnote{It is worth noting, though, that it is not hard to change the construction of~\cite{Par19} to give bounds that are polynomial in $f$: one simply needs to modify the construction in their Lemma 17 to define sets based on a hash value \emph{being} some value, as opposed to the given construction which defines sets based on a hash value \emph{not being} some value.} So we need to change the construction. And in order to optimize the running time we will hash onto a smaller range (approximately $f$, whereas using~\cite{Par19} would hash onto approximately $f^2$) and will use a different hash family (at least for the regime where $f$ is polynomial in $n$).

\subsubsection{Almost Universal Hashing}

Parter~\cite{Par19} began with ``almost-pairwise independent" hash families.  We will use a slightly different (but related) definition of ``almost-universal" hash families, which is a weaker requirement but is sufficient for our needs and will allow us to design faster algorithms than if we required almost-pairwise independence.

\begin{definition} \label{def:universal}
A family $\mathcal H = \{ h : U \rightarrow R\}$ is $\delta$-almost universal if:
\begin{enumerate} 
\item For all $x_1, x_2 \in U$ with $x_1 \neq x_2$: $\Pr_{h \sim H}[h(x_1) = h(x_2)] \leq \frac{\delta}{|R|}$, and
\item For all $y \in R$ and $h \in \mathcal H$, $|\{x \in U : h(x) = y\}| \leq O(|U| / |R|)$.
\item Each $h \in \mathcal H$ can be described with $\tilde O(1)$ bits and can be evaluated in $\tilde O(1)$ time.  
\end{enumerate}
\end{definition}

Setting $\delta = 1$ recovers the standard definition of universal hash families.  The second and third parts of this definition are not always part of the standard definition of universality, but easily follow from most standard constructions (and, in particular, from the constructions that we will use).  The third property implies that in time $\tilde O(|\mathcal H| |U|)$ we can compute all hash functions from the family on all elements of the domain.  

We will use two different constructions of $\delta$-almost universal hash families, one which works for all regimes and one which only gives meaningful bounds when the size of the range is polynomial in the size of the domain.  The first construction we will use is the following:
\begin{theorem} \label{thm:AGHP}
For every $\delta > 1$, there is a $\delta$-almost universal hash family $\mathcal H$ with $O(\text{poly}(1/(\delta-1)) |R|^4 \log^2 |U|)$ functions. 
\end{theorem}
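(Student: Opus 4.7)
The plan is to derandomize a standard universal hash construction by drawing its parameters from an $\epsilon$-biased sample space; given the theorem's label I expect this to be essentially the AGHP construction plugged into classical Carter--Wegman hashing. Fix a prime $p$ with $|U| \le p \le 2|U|$ and consider the family $h_{a,b}(x) = ((ax+b) \bmod p) \bmod |R|$ for $(a,b) \in \mathbb{F}_p^2$. With uniform $(a,b)$ this is universal (collision probability $\le 1/|R|$) and \emph{structurally} balanced in the sense of Property~2, since for every individual $(a,b)$ the map $x \mapsto (ax+b)\bmod p$ is a permutation of $\mathbb{F}_p$ and the subsequent reduction mod $|R|$ puts exactly $\lceil p/|R|\rceil = O(|U|/|R|)$ preimages into each bucket. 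The pair $(a,b)$ is encoded in $n = O(\log|U|)$ bits, so uniform sampling gives a family of size $|U|^{O(1)}$, which is too large.

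Next I would replace the uniform distribution on $\{0,1\}^n$ by the explicit $\gamma$-biased distribution of Alon--Goldreich--H{\aa}stad--Peralta, which has support size $O(n^2/\gamma^2)$, has $\tilde O(1)$-bit sample indices, and whose coordinates can be read off in $\tilde O(1)$ time via an irreducible-polynomial / trace representation (this also establishes Property~3). Property~2 is preserved unconditionally because the balance argument above is per-function, not in expectation. For Property~1 I would Fourier-expand the indicator $\mathbb{1}[h_{a,b}(x_1) = h_{a,b}(x_2)]$ across the two modular reductions: the ``$\bmod\,p$'' stage reduces collision to a nontrivial linear constraint on $(a,b)$ over $\mathbb{F}_p$ whose characters are each damped to $\gamma$ by the bias guarantee, while the ``$\bmod\,|R|$'' stage introduces a union over $O(|R|)$ cosets. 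Pushing these bounds through yields
\[
\Pr[h_{a,b}(x_1) = h_{a,b}(x_2)] \;\le\; \frac{1}{|R|} + O\bigl(|R|\cdot \gamma\bigr),
\]
so taking $\gamma = \Theta((\delta-1)/|R|^2)$ secures the target $\delta/|R|$ bound. Substituting into $O(n^2/\gamma^2)$ gives family size $O\bigl(\poly(1/(\delta-1)) \cdot |R|^4 \log^2|U|\bigr)$, matching the statement.

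The main obstacle is the Fourier bookkeeping in the second step: one must decompose the two-stage reduction into characters of $\mathbb{F}_p$ and bound every surviving term uniformly by $\gamma$ without losing more than a factor of $|R|$ from the inner mod-$|R|$ step --- this is precisely what controls whether the final dependence on $|R|$ is $|R|^2$ or (as stated) $|R|^4$. Everything else is standard: the efficiency claims in Property~3 follow directly from the explicit nature of the AGHP construction, and Property~2 needs no further work because it is a deterministic feature of any individual affine hash composed with modular reduction.
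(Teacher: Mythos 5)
Your route is genuinely different from the paper's, and it is not complete. The paper does not derandomize Carter--Wegman at all: it views an entire hash function as its truth table, a string of $N = |U|\log|R|$ bits, and draws that string from the $(\epsilon,k)$-almost $k$-wise independent sample space of AGHP (their Theorem~2) with $k = 2\log|R|$ and $\epsilon = (\delta-1)/|R|^2$. Then for any fixed $x_1 \neq x_2$ the pair $(h(x_1),h(x_2))$ occupies exactly $k$ bits of the string, so it is pointwise within $\epsilon$ of uniform, and summing over the $|R|$ collision atoms gives collision probability at most $1/|R| + |R|\epsilon \le \delta/|R|$; the support size $O((k\log N/\epsilon)^2)$ yields the stated $O(\poly(1/(\delta-1))\,|R|^4\log^2|U|)$. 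No Fourier analysis of modular arithmetic is needed.

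The gap in your approach is exactly the step you flag as the ``main obstacle,'' and it is not mere bookkeeping. The AGHP generator produces a distribution on $\{0,1\}^n$ that is $\gamma$-biased with respect to \emph{$\mathbb{F}_2$ characters} (parities of bits), whereas your collision event $\mathbb{1}[h_{a,b}(x_1)=h_{a,b}(x_2)]$ is naturally analyzed via characters of $\mathbb{Z}_p\times\mathbb{Z}_p$; a distribution with small bias against parities can have large bias against mod-$p$ characters, so the damping-to-$\gamma$ step does not follow from the tool you invoke. To make your plan work you would need a small-bias space \emph{modulo $p$} (e.g.\ Azar--Motwani--Naor), and you would additionally have to bound the $L_1$ Fourier norm over $\mathbb{Z}_p^2$ of the set $\{(y_1,y_2): y_1\equiv y_2 \pmod{|R|}\}$, which is a union of $|R|$ products of arithmetic progressions and contributes roughly $O(|R|\log^2 p)$ rather than the bare $O(|R|)$ you assert; your displayed bound $1/|R| + O(|R|\gamma)$ is stated without derivation. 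In fairness, your handling of Property~2 is clean (it is a per-function, worst-case property of the affine map followed by reduction mod $|R|$), and arguably cleaner than in the paper's truth-table construction, where per-function bucket balance does not follow automatically from almost $k$-wise independence. But as written, Property~1 --- the heart of the theorem --- is not established.
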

To the best of our knowledge, this theorem does not appear explicitly anywhere in the literature, since most papers just bound the number of random bits as $O(\log |R| + \log\log |U|)$.  Since the constant is unspecified, this is not enough to prove Theorem~\ref{thm:AGHP}.  But Theorem~\ref{thm:AGHP} can easily be derived from Theorem 2 of~\cite{AGHP92} by using $N = |U| \log |R|$ bits to define a hash function and setting $k = 2 \log |R|$ and $\epsilon = (\delta-1) / |R|^2$.  

If $|R|$ is close to $|U|$, we can use a different construction based on ideas from the cryptography literature, and in particular from \emph{Message Authentication Codes} (MACs).  This construction and analysis is essentially standard (see~\cite{KR09,Boer93,Tay94,BJKS93}), but to the best of our knowledge has not been explicitly phrased as a hash function before in the literature.  
We will actually use a slightly weaker (and thus more efficient) version of the standard construction since we only need almost-universality, not almost-pairwise independence.  We give the proof for completeness.

\begin{theorem} \label{thm:high-f-universal}
There is a $\left\lceil \frac{\log |U|}{\log |R|}\right\rceil$-almost universal hash family $\mathcal H$ with $O(|R|)$ functions.  
\end{theorem}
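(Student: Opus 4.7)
The plan is to use the classical polynomial-evaluation construction from cryptographic MAC design. Without loss of generality assume $|R|$ is a prime power (otherwise round up to the next prime power, which by Bertrand's postulate increases $|R|$ by only a constant factor and hence preserves the $O(|R|)$ size bound). Set $t = \lceil \log|U|/\log|R| \rceil$ and identify each $x \in U$ with a tuple $(x_1,\dots,x_t) \in \mathbb{F}_{|R|}^t$ (padding with zeros if necessary). Then for each $a \in \mathbb{F}_{|R|}$ define
\[
h_a(x) \;=\; \sum_{i=1}^{t} x_i \, a^{i-1},
\]
and let $\mathcal{H} = \{h_a : a \in \mathbb{F}_{|R|}\}$. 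This immediately gives $|\mathcal{H}| = |R|$, so only the three bulleted properties of Definition \ref{def:universal} need verification.

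For the collision bound, fix $x \neq y$ in $U$. Then $h_a(x) - h_a(y) = \sum_{i=1}^t (x_i - y_i) a^{i-1}$ is a nonzero polynomial in $a$ of degree at most $t-1$, so it has at most $t-1 \le t$ roots in $\mathbb{F}_{|R|}$. Hence $\Pr_{h \sim \mathcal{H}}[h(x) = h(y)] \le t/|R| = \lceil \log|U|/\log|R|\rceil / |R|$, giving the desired $\delta$. For the balance property, observe that each $h_a$ is an $\mathbb{F}_{|R|}$-linear map from $\mathbb{F}_{|R|}^t$ to $\mathbb{F}_{|R|}$ and is not identically zero (since $h_a(1,0,\dots,0) = 1$, and $h_0(x) = x_1$), so it is surjective and every fiber has size exactly $|U|/|R|$. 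Finally, each $h_a$ is described by a single field element $a$, taking $\lceil \log|R| \rceil = \tilde{O}(1)$ bits, and $h_a(x)$ can be evaluated using Horner's rule in $O(t)$ field operations, which is $\tilde{O}(1)$ under the convention that $\tilde{O}$ hides polylogarithmic factors in $n$.

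I do not expect a real obstacle here; the main thing to be careful about is the prime-power issue (easily handled by rounding) and checking that the balance condition in Definition \ref{def:universal} is actually met exactly (not just up to constants), which follows cleanly from linearity. The one place where an honest choice has to be made is in how we unpack $x \in U$ into a tuple over $\mathbb{F}_{|R|}$: after padding we may have some elements of $\mathbb{F}_{|R|}^t$ that do not correspond to elements of $U$, but this only shrinks the fibers and so the $O(|U|/|R|)$ bound is preserved.
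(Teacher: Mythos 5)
Your construction is the same one the paper uses: the standard MAC-style hash that splits $x$ into base-$|R|$ digits, forms the polynomial $M_x$, and evaluates it at a random field element $a$. The family size $O(|R|)$, the collision bound via counting roots of $M_x - M_y$, and the evaluation time are all argued exactly as in the paper and are correct.

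The one soft spot is your justification of the balance property (item 2 of Definition~\ref{def:universal}). The fibers of $h_a$ on all of $\mathbb{F}_{|R|}^t$ have size $|R|^{t-1}$, and since $t = \lceil \log|U|/\log|R|\rceil$ this can be as large as $\Theta(|U|)$ rather than $O(|U|/|R|)$ when $|U|$ is not an exact power of $|R|$ (e.g., $|U| = |R|^{t-1}+1$). So ``padding only shrinks the fibers'' does not by itself give the bound; indeed, for a poorly chosen embedding of $U$ into $\mathbb{F}_{|R|}^t$ (say, one that forces $x_1 = 0$ for all of $U$) the fiber of $h_0$ over $0$ contains all of $U$. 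The fix is to commit to the natural embedding with $x_1$ as the least significant digit: then the tuple of non-constant coordinates $(x_2,\dots,x_t)$ takes only $\lceil |U|/|R|\rceil$ distinct values as $x$ ranges over $U$, and for each fixed such tuple and each $a$ the map $x_1 \mapsto h_a(x)$ is a translation, hence a bijection on $\mathbb{F}_{|R|}$, so each fiber meets $U$ in at most $\lceil |U|/|R|\rceil$ points. This is precisely the equivalence-class argument in the paper's proof (which tucks the same padding issue into its ``without loss of generality $r$ divides $u$''). With that one-line repair your proof is complete and matches the paper's.
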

\begin{proof}
Without loss of generality, let $U = \{0,1\}^u$ and let $R = \{0,1\}^r$ with $u$ divisible by $r$, and we interpret $U$ as $\mathbb{F}_{2^u}$ and $R$ as $\mathbb{F}_{2^r}$.  A single hash function is defined by a single element $a \in \mathbb{F}_{2^r}$.  Given $x \in U$, we split $x$ into $u/r$ \emph{chunks} $x_0, x_1, x_{(u/r) - 1}$, each of which has $r$ bits and so is an element of $\mathbb{F}_{2^r}$.  This defines a polynomial $M_x(a) = \sum_{i=0}^{(u/r)-1} x_i a^i$ of degree $(u/r)-1$.  Given an element $a \in \mathbb{F}_{2^r}$, we define a hash function $h_{a}(x) = M_x(a)$. 

Let $\mathcal H = \{h_{a} : a \in \mathbb{F}_{2^r}\}$.  Clearly $|\mathcal H| = 2^r = |R|$.  It is also not hard to see that $\mathcal H$ is a $u/r$-almost universal family.  To see this, let $x, y \in U$ with $x \neq y$.  Then $h_a(x) = h_a(y)$ if and only if $M_{x}(a) = M_y(a)$, which is equivalent to $M_x(a) - M_y(a) = 0$.  Clearly $M_x(a) - M_y(a) = \sum_{i=0}^{(u/r)-1} (x_i - y_i) a^i$ is a non-zero polynomial (in $a$) of degree at most $u/r$, so there are at most $u/r$ roots and thus the probability that we choose an $a$ which satisfies this is at most $(u/r) / 2^r = (u/r) / |R|$.

Clearly we can compute these functions quickly enough, so the third property of Definition~\ref{def:universal} holds.  The second property of Definition~\ref{def:universal} also holds, since if we divide the possible $x \in U$ into equivalence classes by everything \emph{except} their lowest-order chunk $x_0$ (so each class has $2^r$ elements and there are $2^{u-r}$ classes), then for every fixed $a$ and every equivalence class there is exactly one element from the class which gets hashed to every possible value.
\end{proof}

Since $\log |U| / \log |R|$ is a constant if $R$ and $U$ are polynomially related, Theorem~\ref{thm:high-f-universal} gives an $O(1)$-almost universal family in the case of $|R| \geq \poly(|U|)$.  We will only use this theorem in that regime.

\subsubsection{Creating Our Sets}
We will use $\delta$-almost universal hash families to create the subsets in the preprocessing stage of Algorithm~\ref{ALG:randomized} rather than creating these sets randomly.  More formally, rather than create $O(f^3 \log n)$ sets independently, we will use the following construction.

Let $\mathcal H$ be a $\delta$-almost universal hash family with domain $V$ and range $[4\delta f]$.\footnote{This might seem strange in conjunction with Theorem~\ref{thm:high-f-universal}, since then the range is a function of $\delta$ but $\delta$ is also a function of the range.  We show how to set the parameters appropriately when we actually instantiate this hash family in the proof of Theorem~\ref{thm:deterministic-main}.}   We will create $|\mathcal H| \binom{4\delta f}{2} = \Theta(|\mathcal H| f^2)$ sets as follows: for every $h \in H$ and $y,z \in [4\delta f]$ with $y \neq z$, we let 
\[
V_{h, \{y,z\}} = \{v \in V: h(v) \in \{y,z\}\}
\]

In order to keep our previous notation, we will let $\alpha = |\mathcal H| \binom{4\delta f}{2}$ and will arbitrarily number these sets and refer to them as $V_1, \dots, V_{\alpha}$.  And as before, we will let $L_e = \{i \in [\alpha] : e \subseteq V_i\}$ be the sets which contain both endpoints of $e$.  We first need the equivalent of Lemma~\ref{lem:FS-random}, but now all bounds are deterministic.

\begin{lemma} \label{lem:FS-deterministic}
If $\mathcal H$ is $\delta$-universal hash family for some constant $\delta > 1$ then there exist constants $c_1, c_2, c_3 > 0$ so that for sufficiently large $n$ the sets $V_1, \dots, V_{\alpha}$ have the following properties:
\begin{enumerate}
    \item $|V_i| \leq c_1 n/f$ for all $i \in [\alpha]$,
    \item $|L_e| \leq c_2 \delta |\mathcal H|$ for all $e \in E$, and
    \item $|\{i \in L_e : F \cap V_i = \emptyset\}| \geq c_3 |\mathcal H|$ for all $e \in E$ and $F \subseteq V$ with $|F| \leq f$ and $F \cap e = \emptyset$.
\end{enumerate}
\end{lemma}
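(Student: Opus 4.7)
All three properties follow essentially mechanically from the definition of $\delta$-almost-universality once we notice that the range size $|R|=4\delta f$ has been chosen precisely so the relevant collision probabilities remain small compared to $1$. I would handle the three properties in order; only property (3) requires any real thought.

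For property (1), observe that $V_{h,\{y,z\}} = h^{-1}(y)\cup h^{-1}(z)$, and by part (2) of Definition \ref{def:universal} each preimage has size at most $O(|V|/|R|)=O(n/(4\delta f))=O(n/f)$. So $|V_i|\le c_1 n/f$ for an absolute constant $c_1$. For property (2), fix $e=(u,v)$ and count pairs $(h,\{y,z\})$ with $\{u,v\}\subseteq V_{h,\{y,z\}}$ by splitting over $h\in\mathcal H$. If $h(u)\ne h(v)$ then $\{y,z\}$ must equal $\{h(u),h(v)\}$, contributing $1$ pair; if $h(u)=h(v)$ then any of the $4\delta f-1$ pairs containing this common value works, contributing $O(\delta f)$. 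By the collision bound in Definition \ref{def:universal}(1), the number of $h$ with $h(u)=h(v)$ is at most $|\mathcal H|\cdot \delta/|R|=|\mathcal H|/(4f)$. Summing, $|L_e|\le |\mathcal H|+(|\mathcal H|/(4f))\cdot O(\delta f)=O(\delta|\mathcal H|)$.

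For property (3), the key idea is to call $h\in\mathcal H$ \emph{good for $(e,F)$} if $h(u)\ne h(v)$ and $h(a)\notin\{h(u),h(v)\}$ for every $a\in F$. For each good $h$, the set $V_{h,\{h(u),h(v)\}}$ lies in $L_e$ and is disjoint from $F$, so good $h$'s contribute one distinct set apiece to the count we want to lower-bound. To count good $h$, I would union-bound over the $1+2|F|\le 2f+1$ collision events that must be avoided: the event $h(u)=h(v)$, and for each $a\in F$ the two events $h(a)=h(u)$ and $h(a)=h(v)$. By Definition \ref{def:universal}(1), each such event holds for at most a $\delta/|R|=1/(4f)$ fraction of $h\in\mathcal H$. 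Thus the fraction of bad $h$ is at most $(2f+1)/(4f)\le 3/4$, so at least $|\mathcal H|/4$ functions are good, giving $c_3=1/4$.

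\textbf{Main obstacle.} There is no real structural obstacle — everything reduces to a union bound — but the one delicate point is why the range size $4\delta f$ is the right choice. It must be large enough that the sum of $1+2f$ collision probabilities is bounded away from $1$; with $|R|=4\delta f$ this sum is at most $3/4$, leaving a constant fraction of surviving hash functions. Any smaller range would make the union bound in property (3) trivial, and any larger range would weaken property (1). The calibration of this one parameter is all that matters; the rest is bookkeeping.
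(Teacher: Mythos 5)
Your proposal is correct and follows essentially the same route as the paper: property (1) from the preimage-size bound, property (2) by splitting over whether $h(u)=h(v)$, and property (3) by a first-moment argument over the $O(f)$ collision events each occurring for at most a $1/(4f)$ fraction of $\mathcal H$ (the paper phrases this as Markov's inequality applied to the expected number of collisions and handles the $h(u)=h(v)$ event separately, obtaining $c_3=3/8$ rather than your $1/4$, but this is the same union bound). No gaps.
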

\begin{proof}
The second property of Definition~\ref{def:universal} implies that each set has size at most $O(n/(4 \delta f)) = O(n/f)$, so the first property is clearly true.  For the second property, let $e = (u,v) \in E$, and let $h \in \mathcal H$.  If $h(u) = h(v)$ then $e$ is contained in $4\delta f - 1$ of the $\binom{4\delta f}{2}$ subsets defined by $h$, and if $h(u) \neq h(v)$ then $e$ is contained in $1$ of the subsets defined by $h$.  By the first part of the definition of almost universal, $\Pr_{h \sim \mathcal H}[h(u) = h(v)] \leq \frac{\delta}{4\delta f} = \frac{1}{4f}$.  Thus
\begin{align*}
    |L_e| &\leq \sum_{h \in \mathcal H} 1 + \sum_{h \in \mathcal H : h(u) = h(v)} (4\delta f - 1) \leq |\mathcal H| + \delta |\mathcal H| = (1+\delta) |\mathcal H|.
\end{align*}

For the third property, fix $e = (u,v) \in E$ and  $F \subseteq V \setminus \{u,v\}$ with $|F| \leq f$.  For every $x \in F$, let $X_{xu}$ be the indicator random variable for the event that $h(x) = h(u)$, and similarly let $X_{xv}$ be the indicator random variable for $h(x) = h(v)$.  Note that by the definition of almost-universal, both of these random variables have expectation at most $\delta/(4\delta f) = 1 / (4f)$.  Let $Z \subseteq F$ be the set of elements of $F$ that hash to the same value as $u$ or $v$ (so $Z$ is a random subset).  Then $|Z| \leq \sum_{x \in F} \left(X_{xu} + X_{xv}\right)$, and thus 
\begin{align*}
    \E[|Z|]& \leq \E\left[\sum_{x \in F} (X_{xu} + X_{xv})\right] = \sum_{x \in F} \left(\E[X_{xu}] + \E[X_{xv}]\right) \leq \sum_{x \in F} \frac{2}{4f} \leq \frac12
\end{align*}

So by Markov's inequality, the probability that $|Z| \geq 1$ is at most $1/2$.  Thus in at least half of the hash functions from $\mathcal H$, nothing from $F$ has the same hash value as $u$ or $v$.  If we have such an $h$, \emph{and} if $h(u) \neq h(v)$, then this means that $F \cap V_{h, \{h(u), h(v)\}} = \emptyset$.  As discussed earlier, the number of hash functions in which $h(u) = h(v)$ is at most $|\mathcal H| / (4f)$.  Thus the number of sets which contain both $u$ and $v$ but do not contain any element of $F$ is at least
\begin{align*}
    |\mathcal H| \left(\frac12 - \frac{1}{4f}\right) \cdot 1 \geq \frac38 |\mathcal H|
\end{align*}
as claimed.
\end{proof}

\subsection{The Algorithm and Analysis}

Our deterministic algorithm is the same as Algorithm~\ref{ALG:randomized}, except:
\begin{enumerate}
    \item Instead of sampling $O(f^3 \log n)$ sets, we use the above construction to create $O(|\mathcal H| f^2)$ sets which obey Lemma~\ref{lem:FS-deterministic}, 
    \item We set the threshold $\tau$ to $c_3/c_2$, i.e., we add $e$ to the spanner if $P_e \geq c_3/c_2$ (where $c_2$ and $c_3$ are the constants from Lemma~\ref{lem:FS-deterministic}).
\end{enumerate}

We now prove correctness and size bounds in essentially the same way as in the randomized algorithm.  They just hold deterministically rather than with high probability.

\begin{lemma} \label{lem:deterministic-correct}
This algorithm returns an $f$-VFT $(2k-1)$-spanner. 
\end{lemma}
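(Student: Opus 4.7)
The plan is to mimic the proof of Lemma~\ref{lem:random-correct} almost line-for-line, but replacing the high-probability guarantees from Lemma~\ref{lem:FS-random} with the deterministic guarantees from Lemma~\ref{lem:FS-deterministic}. By Lemma~\ref{lem:structure}, it suffices to show that for every edge $e=(u,v) \in E$ and every fault set $F \subseteq V \setminus \{u,v\}$ with $|F| \le f$, the final spanner $H$ satisfies $d_{H \setminus F}(u,v) \le (2k-1) w(e)$. As in the randomized proof, I will argue the contrapositive: if at the moment the algorithm considers $e$ there exists such an $F$ with $d_{H_e \setminus F}(u,v) > (2k-1) w(e)$, then the algorithm necessarily adds $e$ to $H$.

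First I would observe, exactly as in Lemma~\ref{lem:random-correct}, that because edges are processed in nondecreasing weight order, every edge in $H_e$ has weight at most $w(e)$, and so the assumed weighted bound implies the unweighted bound $d^*_{H_e \setminus F}(u,v) > 2k-1$. Next, I would apply part~(3) of Lemma~\ref{lem:FS-deterministic} to conclude that at least $c_3 |\mathcal H|$ of the indices $i \in L_e$ satisfy $F \cap V_i = \emptyset$. For each such $i$, the current subgraph $H_i$ is a subgraph of $H_e$ whose vertex set $V_i$ avoids $F$ entirely, so $d^*_{H_i}(u,v) \ge d^*_{H_e \setminus F}(u,v) > 2k-1$. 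By the equivalence between unweighted distance in $H_i$ and reachability in the layered graph $H_i^{2k}$, the query from $(u,1)$ to $(v,2k)$ in $H_i^{2k}$ returns NO for every such $i$.

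Combining this lower bound with the upper bound on $|L_e|$ from part~(2) of Lemma~\ref{lem:FS-deterministic}, I get
\[
P_e \;\ge\; \frac{c_3 |\mathcal H|}{|L_e|} \;\ge\; \frac{c_3 |\mathcal H|}{c_2 \delta |\mathcal H|} \;=\; \frac{c_3}{c_2 \delta},
\]
which is a strictly positive absolute constant (since $\delta$ is a constant from our hash family). Since the algorithm's threshold $\tau$ is set to a constant at most $c_3/(c_2\delta)$, the test $P_e \ge \tau$ is satisfied and the edge $e$ is added to $H$. Applying Lemma~\ref{lem:structure} to the set of edges \emph{not} added, we conclude that $H$ is an $f$-VFT $(2k-1)$-spanner.

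There is no real obstacle here: the only subtlety is bookkeeping the constants, since the quantitative guarantees that were probabilistic in Lemma~\ref{lem:FS-random} are now deterministic in Lemma~\ref{lem:FS-deterministic} but have a slightly different form (in particular a $\delta$ factor sneaks into the bound on $|L_e|$). As long as the threshold $\tau$ is chosen smaller than $c_3/(c_2\delta)$, the argument goes through unchanged, and the conclusion holds deterministically rather than with high probability.
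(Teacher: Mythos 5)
Your proof is correct and follows essentially the same route as the paper: reduce via Lemma~\ref{lem:structure} to showing that any edge with a bad fault set at the time it is considered must be added, pass from the weighted bound to the unweighted hop bound using the nondecreasing weight order, and invoke parts~(2) and~(3) of Lemma~\ref{lem:FS-deterministic} to lower-bound $P_e$ by a constant exceeding the threshold $\tau$. Your bookkeeping of the extra $\delta$ factor in the bound on $|L_e|$ (so that $\tau$ should be at most $c_3/(c_2\delta)$) is in fact slightly more careful than the paper's, which silently absorbs $\delta$ into the constant.
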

\begin{proof}
Let $e = (u,v)$.  As with our proofs of correctness of Algorithms~\ref{ALG:basic} and \ref{ALG:randomized}, by Lemma~\ref{lem:structure} we just need to show that if when the algorithm considers $e$ there is a fault set $F$ which is bad for $e$ (i.e., $|F| \leq f$ and $d_{H \setminus F}(u,v) > (2k-1) w(e)$), then the algorithm adds $e$ to $H$.   

Note that for any $i \in [\alpha]$ where $e \subseteq V_i$ but $F \cap V_i = \emptyset$, we have $d_{H_i}(u,v) > (2k-1) w(e)$ (since removing $F$ is enough to make the distance too large) and hence there is no $(2k-1)$-hop path from $u$ to $v$ in $H_i$ (since we consider the edges in nondecreasing weight order).  This implies that our reachability query from $(u,1)$ to $(v, 2k)$ in $H_i^{2k}$ will return NO.  Thus by Lemma~\ref{lem:FS-deterministic}, $P_e \geq \frac{c_3|\mathcal H|}{c_2|\mathcal H|} = \tau$, and so the algorithm will add $e$ to the spanner.  Lemma~\ref{lem:structure} then implies the theorem.
\end{proof}

We can now prove our size bound.

\begin{lemma} \label{lem:deterministic-size}
$|E(H)| \leq O(f^{1-1/k} n^{1+1/k})$.
\end{lemma}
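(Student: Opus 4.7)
The plan is to follow the two-sided expectation argument of Lemma~\ref{lem:random-size} essentially verbatim, with the role of the high-probability Lemma~\ref{lem:FS-random} played by the deterministic Lemma~\ref{lem:FS-deterministic}. I would let $H'$ be a uniformly random subgraph drawn from the $\alpha = |\mathcal{H}|\binom{4\delta f}{2}$ deterministic subgraphs $H_1, \ldots, H_\alpha$ built in preprocessing, and (as before) call an edge $(u,v) \in E(H')$ \emph{bad} if $d^*_{H'_e}(u,v) \le 2k-1$ and \emph{good} otherwise, and let $H'' \subseteq H'$ be obtained by removing all bad edges.

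For the upper bound on $\E[|E(H'')|]$, the girth argument from Lemma~\ref{lem:random-size} transfers without change: in any $\le 2k$-cycle inside some $H_i$, the edge considered last by the algorithm witnesses a $\le (2k-1)$-hop alternate path and is therefore bad, so $H_i''$ has girth $>2k$. The key simplification in the deterministic setting is that Lemma~\ref{lem:FS-deterministic}(1) gives $|V(H_i)| = |V_i| \le c_1 n/f$ \emph{for every} $i$, so the Moore bound applies pointwise and gives $|E(H_i'')| \le O((n/f)^{1+1/k})$ for all $i$. Averaging over $i$ yields $\E[|E(H'')|] \le O((n/f)^{1+1/k})$ with no variance manipulation needed.

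For the lower bound, fix $e = (u,v) \in E(H)$. Taking $F = \emptyset$ in Lemma~\ref{lem:FS-deterministic}(3) gives $|L_e| \ge c_3 |\mathcal{H}|$, so the denominator of $P_e$ is never zero. Because the algorithm added $e$, we have $P_e \ge \tau = c_3/c_2$; equivalently, the number of indices $i \in L_e$ where the reachability query in $H_{i,e}^{2k}$ returned NO is at least $\tau |L_e| \ge (c_3^2/c_2)|\mathcal{H}|$. A NO answer means $d^*_{H_{i,e}}(u,v) > 2k-1$, so these are exactly the indices for which $e$ survives in $H_i''$. Thus $\Pr_i[e \in E(H_i'')] \ge \Omega(|\mathcal{H}|/\alpha) = \Omega(1/f^2)$, and linearity of expectation gives $\E[|E(H'')|] = \Omega(|E(H)|/f^2)$.

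Combining the two estimates yields $\Omega(|E(H)|/f^2) \le O((n/f)^{1+1/k})$, which rearranges to $|E(H)| \le O(f^{1-1/k} n^{1+1/k})$, as claimed. There is no real obstacle in this proof — the deterministic guarantees of Lemma~\ref{lem:FS-deterministic} are cleaner than their probabilistic counterparts, and the only subtle point is verifying that the threshold $\tau = c_3/c_2$ interacts correctly with $|L_e| \ge c_3 |\mathcal{H}|$ (with $F=\emptyset$) to produce the desired $\Omega(1/f^2)$ incidence rate.
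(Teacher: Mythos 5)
Your proposal is correct and follows essentially the same route as the paper's proof: draw $H'$ uniformly from the $\alpha$ deterministic subgraphs, bound $\E[|E(H'')|]$ from above via the pointwise bound $|V_i| \le c_1 n/f$ and the girth/Moore argument, and from below by using Lemma~\ref{lem:FS-deterministic}(3) with $F=\emptyset$ together with $P_e \ge \tau = c_3/c_2$ to get a survival probability of $\Omega(1/f^2)$ per spanner edge. The only cosmetic difference is that the paper phrases the survival criterion as ``not the heaviest edge on a $\le 2k$-cycle'' while you phrase it directly via the NO answers to the reachability queries; these are equivalent, so nothing further is needed.
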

\begin{proof}
We use randomness in our analysis in the same way it was used in the proof of Lemma~\ref{lem:random-correct}, but note that the algorithm itself is deterministic.

Choose an $i \in [\alpha]$ uniformly at random, and let $H' = H_i$.  Then for each $(2k)$-cycle in $H'$, remove the heaviest edge to get $H''$.

By part 1 of Lemma~\ref{lem:FS-deterministic}, we know that $|V(H'')| \leq O(n/f)$.  Since $H''$ has no $(2k)$-cycles, this implies that 
\begin{equation} \label{eq:deterministic-H''-upper}
|E(H'')| \leq O\left((n/f)^{1+1/k}\right).
\end{equation}

Fix some $e = (u,v) \in E(H)$.  Then the third property of Lemma~\ref{lem:FS-deterministic} (with $F = \emptyset$) implies that $\Pr[e \in H'] \geq \Omega(|\mathcal H| / (|\mathcal H| f^2)) =  \Omega(1/f^2)$.  Conditioned on this, $e$ fails to survive to $H''$ only if it was the heaviest edge on some  $(2k)$-cycle in $H'$.  Since the algorithm considers the edges in nondecreasing weight order, $e$ is the heaviest edge on some $(2k)$-cycle if and only if when it was added by the algorithm the hop-distance between $u$ and $v$ in $H'$ was at most $2k-1$. 

The probability of this happening (conditioned on $e$ being in $H'$) is by definition equal to $1-P_e$, and so the probability that $e$ \emph{does} survive to $H''$ (conditioned on being in $H'$) is precisely $P_e$.  Since $e$ was added by the algorithm we know that $P_e \geq c_3/c_2$, and thus the probability that $e$ survives to $H''$ (conditioned on being in $H'$) is at least $c_3/c_2$.  

Hence 
\begin{equation} \label{eq:deterministic-H''-lower}
    \E[|E(H'')|] \geq |E(H)| \cdot \Omega(1/f^2) \cdot (c_3/c_2) \geq \Omega(|E(H)| / f^2)
\end{equation}

Combining \eqref{eq:deterministic-H''-upper} and \eqref{eq:deterministic-H''-lower} implies that
\begin{align*} 
|E(H)| \leq O(f^2 \cdot (n/f)^{1+1/k}) = O(f^{1-1/k} n^{1+1/k})
\end{align*}
as claimed.  
\end{proof}

We now analyze the running time.  Clearly it depends on $|\mathcal H|$, but for now we will leave this as a parameter.  

\begin{lemma} \label{lem:deterministic-running-time}
The running time of the deterministic algorithm is at most 
\[\tilde O\left(f^{-1/k} n^{2 + 1/k} |\mathcal H|  + m |\mathcal H| f \right).\]
\end{lemma}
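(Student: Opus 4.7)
The plan is to reuse the structure of the proof of Lemma~\ref{lem:randomized-running-time}, substituting the deterministic bounds from Lemma~\ref{lem:FS-deterministic} in place of the Chernoff-type ones. The key substitutions are that the total number of sets is now $\alpha = \Theta(|\mathcal H| f^2)$ instead of $\Theta(f^3 \log n)$, that $|V_i| = O(n/f)$ and $|L_e| = O(|\mathcal H|)$ hold deterministically, and that random coin flips are replaced by $\tilde O(|\mathcal H| n)$ worth of hash evaluations via part~3 of Definition~\ref{def:universal}. As in the randomized proof, I would assume without loss of generality that $m \geq f^{1-1/k}n^{1+1/k} \geq fn$, since otherwise the algorithm returns the input graph immediately.

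For the preprocessing phase I would split the work into three pieces: (i)~evaluating the hash family on all vertices and recording for each vertex its sorted membership list $L_v$, which costs $O(nf|\mathcal H|) \le O(mf|\mathcal H|)$ because each vertex lies in $O(f)$ sets per hash function; (ii)~building the $\alpha = \Theta(|\mathcal H| f^2)$ layered graphs, initializing Italiano's data structure on each, and inserting the initial ``column'' edges, which by Theorem~\ref{thm:dynamic-connectivity} costs $O(\alpha \cdot (kn/f)^2) = O(k^2 |\mathcal H| n^2)$ and is at most $\tilde O(|\mathcal H| f^{-1/k} n^{2+1/k})$ since $f \le n$; and (iii)~computing each $L_e = L_u \cap L_v$ by a single pass over sorted lists, at cost $O(f|\mathcal H|)$ per edge and $O(mf|\mathcal H|)$ in total.

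For the main loop I would bound reachability queries and spanner insertions separately. There are $O(|L_e|) = O(|\mathcal H|)$ queries per edge at $O(1)$ cost each, totalling $O(m|\mathcal H|)$. By Lemma~\ref{lem:deterministic-size} at most $O(f^{1-1/k} n^{1+1/k})$ edges are added to $H$, and each addition triggers $O(k)$ insertions into each of the $O(|\mathcal H|)$ layered graphs in $L_e$, at amortized cost $O(kn/f)$ per insertion, summing to $O(k^2 |\mathcal H| f^{-1/k} n^{2+1/k})$. Combining all contributions and absorbing $k^2$ and logarithmic factors into $\tilde O$ gives the claimed bound. There is no real obstacle beyond careful bookkeeping; the nontrivial design work was already done in Lemma~\ref{lem:FS-deterministic}, which ensures that both $\alpha$ and $|L_e|$ shrink by a factor of $f$ relative to the randomized algorithm, so the asymptotics line up cleanly.
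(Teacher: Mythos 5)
Your proposal is correct and follows essentially the same route as the paper's proof: the same reduction to Lemma~\ref{lem:randomized-running-time} with the deterministic bounds $\alpha = \Theta(|\mathcal H| f^2)$, $|V_i| = O(n/f)$, $|L_e| = O(|\mathcal H|)$ substituted in, the same three-part accounting of the preprocessing, and the same query/insertion split for the main loop. All the individual bounds match those in the paper.
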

\begin{proof}
We proceed as in Lemma~\ref{lem:randomized-running-time} by first analyzing the preprocessing, and assuming without loss of generality that $m \geq f^{1-1/k} n^{1+1/k} \geq fn$.  

Our set system has $O(|\mathcal H| f^2)$ sets.  For each hash function in $\mathcal H$, we can create the $f$ buckets in $\tilde O(n)$ time (by evaluating the hash function on each vertex in $\tilde O(1)$ time).  Then we can build each of the $\Theta(f^2)$ sets for that function in time $O(n/f)$, so the time to create all of the $\Theta(|\mathcal H| f^2)$ sets is $\tilde O(|\mathcal H| f n)$.  Creating the layered graphs takes additional $O(|\mathcal H| f^2 (n/f) k) = O(|\mathcal H| n f k) = O(mk |\mathcal H|)$ time.  Initializing all $O(|\mathcal H| f^2)$ data structures from Theorem~\ref{thm:dynamic-connectivity} and inserting the initial edges takes time $O(|\mathcal H| f^2 \cdot (kn/f)^2) = O(k^2 n^2 |\mathcal H|) = \tilde O(f^{-1/k} n^{2+1/k} |\mathcal H|)$.  
As in the analysis of the fast randomized algorithm (Lemma~\ref{lem:randomized-running-time}), while creating the sets we can record for each vertex a sorted list of which sets it is in, and then can create each $L_e$ set by simple set intersection.  Since each vertex is in $O(|\mathcal H| f)$ sets, this takes time $O(m |\mathcal H| f)$.   

Thus our total preprocessing time is $\tilde O(f^{-1/k} n^{2+1/k} |\mathcal H| + m |\mathcal H| f)$.

We now analyze the main greedy loop.  For every $e = (u,v) \in E$, the algorithm performs a connectivity query in $|L_e| \leq O(|\mathcal H|)$ of the layered subgraphs.  By Theorem~\ref{thm:dynamic-connectivity}, this takes $O(m |\mathcal H|)$ total time.  When we decide to add an edge $e$ to the spanner (which happens at most $O(f^{1-1/k} n^{1+1/k})$ times by Lemma~\ref{lem:deterministic-size}), we have to do $O(k)$ insertions into each of the $O(|L_e|) = O(|\mathcal H|)$ layered graphs in $L_e$.  The amortized cost of each insertion is $O(kn/f)$ by Theorem~\ref{thm:dynamic-connectivity} (since the number of nodes in each layered graph is $O(kn/f)$ by Lemma~\ref{lem:FS-deterministic}), and hence the total time of all insertions is $O(f^{1-1/k} n^{1+1/k} (kn/f) k |\mathcal H|) = O(k^2 f^{-1/k} n^{2+1/k} |\mathcal H|)$.  This is asymptotically larger than $m |\mathcal H|$ since $m < n^2$ and $f < n$, and hence the running time of the main loop is $O(k^2 f^{-1/k} n^{2+1/k} |\mathcal H|)$.
\end{proof}

This now finally allows us to prove Theorem~\ref{thm:deterministic-main}, which we restate here for clarity.
\begin{customthm}{\ref{thm:deterministic-main}}
There is a deterministic algorithm which constructs an $f$-VFT $(2k-1)$ spanner with at most $O(f^{1-1/k} n^{1+1/k})$ edges in time $\tilde O(f^{4-1/k} n^{2+1/k} + mf^5)$, and if $f = \poly(n)$ (i.e., $f \geq n^c$ for some constant $c > 0$) then the running time improves to $\tilde O(f^{1-1/k} n^{2+1/k} + mf^2)$ 
\end{customthm}
\begin{proof}
The fault tolerance and size bounds are from Lemmas~\ref{lem:deterministic-correct} and \ref{lem:deterministic-size}.  Using Lemma~\ref{lem:deterministic-running-time} with the almost universal construction of Theorem~\ref{thm:AGHP} (with $\delta = 2$) gives a running time of $\tilde O(k^2 f^{-1/k} n^{2+1/k} f^4 + mf^5) = \tilde O(f^{4-1/k} n^{2+1/k} + mf^5)$.  

On the other hand, if $f \geq n^c$ for some constant $c > 0$ then we want to use the almost universal construction of Theorem~\ref{thm:high-f-universal}.  But as discussed earlier, we have to be a little careful since if we use this construction then the range is a function of $\delta$ but $\delta$ is also a function of the range.  So we need to show that we can set $\delta$ so that it is a constant and the range is $[4\delta f]$ (since that is the construction we used).  We first set $\delta = \frac{\log n}{\log f} \leq 1/c$ and set the range of the hash family to be $4 \delta f$.  Then if we use Theorem~\ref{thm:high-f-universal} with this range, the theorem implies that this is a $\delta' = \frac{\log n}{\log (4 \delta f)}$-almost universal family.  But $\delta'$ is clearly at most $\delta$, and hence it is also a $\delta$-almost universal family.  So we have, as required, a $\delta$-almost universal family with range $[4 \delta f]$ where $\delta \leq 1/c$ is a constant.  When we use this family, Lemma~\ref{lem:deterministic-running-time} implies that the running time is $\tilde O(k^2 f^{-1/k} n^{2+1/k} f + mf^2) = \tilde O(f^{1-1/k} n^{2+1/k} + mf^2)$.
\end{proof}

\section{Conclusion and Open Questions} \label{sec:conclusion}
In this paper we gave the first polynomial-time algorithm to construct optimal-size vertex fault-tolerant spanners.  Our algorithm, after being optimized for running time, is also significantly faster than the previous best (non-optimal) polynomial time algorithm.  We also derandomized our algorithm to get a deterministic algorithm, which is always polynomial time and, in the most interesting regime of $f$ being polynomial in $n$, is just as fast (ignoring polylogarithmic factors) as our randomized algorithm.

There are still a number of tantalizing open problems involving fault tolerant spanners.  Algorithmically, while we significantly optimized the running time to make it surprisingly efficient, even in the regime $f=O(1)$ it is still not as fast the fastest algorithms for non-FT spanners (e.g., \cite{BaswanaS:07} which has running time $\tilde O(m)$).  These fast non-FT spanner algorithms are \emph{not} the greedy algorithm, which seems to be unable to achieve such an efficient running time.  Is it possible to compute optimal-size fault-tolerant spanners in time $\tilde O(m)$?  Similarly, there has been significant work on computing spanners in other models of computation, most notably in distributed and parallel models (see~\cite{BaswanaS:07,DGP07,DGP07,ParterY18,biswas2020massively} for a small sampling of such results).  The greedy algorithm is typically difficult to parallelize or implement efficiently distributedly (particularly in the presence of congestion), so there is the obvious question of computing optimal-size fault tolerant spanners efficiently in these models.

Next, we note that while all of our constructions and bounds work as well in the case of edge faults, they are not known to be optimal.
The best-known lower bound on the size of an $f$-edge fault tolerant $(2k-1)$ spanner, proved in~\cite{BDPW18}, is only $\Omega\left(f^{\frac12 (1-1/k)} n^{1+1/k}\right)$ rather than $\Omega\left(f^{1-1/k} n^{1+1/k}\right)$ as for vertex faults (for $k \ge 3$; for $k=2$ the lower bounds are both $\Omega\left(f^{1/2} n^{3/2}\right)$, and hence the size bounds achieved here are optimal).
Closing this gap for general $k$, by either giving improved upper bounds or improved lower bounds (or both), is probably the most important open question about fault-tolerant spanners.

Finally, there are some additional nice properties of the non-faulty greedy algorithm, and it would be interesting to determine whether these have desirable analogs for FT (slack-)greedy algorithms as well.
In particular: the non-FT greedy algorithm gives optimal spanners for several important graph classes like Euclidean graphs and doubling metrics \cite{LS19}, it produces optimal spanners as measured by \emph{lightness} \cite{FS20}, and there is experimental evidence that it performs particularly well on graphs encountered in practice \cite{FG05}.

\section*{Acknowledgements} We would like to thank Xin Li for many helpful discussions about derandomization, and in particular for pointing us towards message authentication codes.

\bibliographystyle{alpha}
\bibliography{refs}

\end{document}